\renewcommand{\Pr}{\ensuremath{\operatorname{Pr}}}
\newtheorem{theorem}{\bf Theorem}
\newtheorem{proposition}{\bf Proposition}
\newcounter{step}
\newlength{\totlinewidth}
  {\end{list}%
  \rule{\linewidth}{1pt}}
\newcounter{substep}
\newlength{\aligntop}
\newlength{\alignbot}
\renewenvironment{align}{%
  \vspace{\aligntop}
  \start@align\@ne\st@rredfalse\m@ne
}{%
  \math@cr \black@\totwidth@
  \egroup
  \ifingather@
    \restorealignstate@
    \egroup
    \nonumber
    \ifnum0=`{\fi\iffalse}\fi
  \else
    $$%
  \fi
  \ignorespacesafterend%
  \vspace{\alignbot}\par\noindent
} \makeatother
\newcommand\semihuge{\@setfontsize\semihuge{18.2}{22}}
\newcommand\semismall{\@setfontsize\semihuge{12.4}{15}}
\begin{document}
\clearpage
\title{\semihuge Caching in the Sky: Proactive Deployment of Cache-Enabled Unmanned Aerial Vehicles for Optimized Quality-of-Experience}
  
  \author{\IEEEauthorblockN{\semismall Mingzhe Chen$^1$, Mohammad Mozaffari$^2$, Walid Saad$^{2,4}$, Changchuan Yin$^1$,\\ M\'erouane Debbah$^3$, and Choong-Seon Hong$^4$}\vspace{0.05cm}\\

	\IEEEauthorblockA{
		\small $^1$ Beijing Laboratory of Advanced Information Network, Beijing University of Posts and Telecommunications, Beijing, China 100876, Emails: \url{chenmingzhe@bupt.edu.cn} and \url{ccyin@ieee.org.}\\
		$^2$ Wireless@VT, Electrical and Computer Engineering Department, Virginia Tech, VA, USA,\\ Emails:\url{{mmozaff , walids}@vt.edu}.\\
		$^3$ Mathematical and Algorithmic Sciences Lab, Huawei France R \& D, Paris, France, \\Email: \url{merouane.debbah@huawei.com}.\\
		$^4$ Department of Computer Science and Engineering, Kyung Hee University, Yongin, South Korea, \\Email: \protect\url{cshong@khu.ac.kr.}\\ 
	}\vspace{-0.9cm}}


%
%
%
%

\maketitle
\thispagestyle{empty}
\vspace{0cm}
\begin{abstract}
\vspace{-0.1cm}
In this paper, the problem of proactive deployment of cache-enabled unmanned aerial vehicles (UAVs) for optimizing the quality-of-experience (QoE) of wireless devices in a cloud radio access network (CRAN) is studied. In the considered model, the network can leverage human-centric information such as users' visited locations, requested contents, gender, job, and device type to predict the content request distribution and mobility pattern of each user. Then, given these behavior predictions, the proposed approach seeks to find the user-UAV associations, the optimal UAVs' locations, and the contents to cache at UAVs. This problem is formulated as an optimization problem whose goal is to maximize the users' QoE while minimizing the transmit power used by the UAVs. To solve this problem, a novel algorithm based on the machine learning framework of conceptor-based echo state networks (ESNs) is proposed. Using ESNs, the network can effectively predict each user's content request distribution and its mobility pattern when limited information on the states of users and the network is available. 
Based on the predictions of the users' content request distribution and their mobility patterns, we derive the optimal user-UAV association,  optimal locations of the UAVs as well as the content to cache at UAVs. Simulation results using real pedestrian mobility patterns from BUPT and actual content transmission data from Youku show that the proposed algorithm can yield 40\% and 61\%
gains, respectively, in terms of the average transmit power and the percentage of the users with satisfied QoE compared to a benchmark algorithm without caching and a benchmark solution without UAVs. 

\end{abstract}\vspace{-0.5cm}

\vspace{0cm}
\vspace{-0cm}

\section{Introduction}
The next-generation of cellular systems is expected to be largely user centric and, as such, it must be cognizant of human-related information such as users' behavior, mobility patterns, and quality-of-experience (QoE) expectations \cite{Chen2016User}. 
One promising approach to introduce such wireless network designs with human-in-the-loop is through the use of cloud radio access networks (CRANs) \cite{MugenRecent}. In CRANs, a central cloud processor can parse through the massive users' data to learn the users' information such as content request distribution and mobility patterns and, then, determine how to manage resources in the network. 
However, an effective exploration of human-in-the-loop features in a CRAN faces many challenges that range from effective predictions to user behavior tracking, effective caching, and optimized resource management.


Some recent works such as in\cite{Nguyen2012Extracting,Lee2006Modeling,UserinitiatedData,Song2010Limits,Bigdata,SoysaPredicting, NagarajaCaching,EchoStateNetworks2,ProactiveCaching2016,Semiari2015Context,Zhang2015Social} have studied a number of ideas related to CRANs with human-in-the-loop. 
In \cite{Nguyen2012Extracting} and \cite{Lee2006Modeling}, a new approach is proposed for predicting the users' mobility patterns using a deep learning algorithm and a semi-Markov process. The authors in \cite{UserinitiatedData} proposes a new type of user-initiated network for cellular users to trade data plans by leveraging personal hotspots with users' smartphones. The work in \cite{Song2010Limits} investigated the probability of predicting users' mobility patterns. Nevertheless, the mobility prediction works in\cite{Nguyen2012Extracting,Lee2006Modeling,Song2010Limits,UserinitiatedData} focused only on the prediction phase and did not study how the users' mobility patterns can be used to optimize the wireless performance using user-centric caching and resource allocation techniques.   The authors in \cite{Bigdata} developed a data extraction method using the Hadoop platform to predict content popularity. The work in \cite{SoysaPredicting} proposed a fast threshold spread model to predict the future access patterns of multimedia content based on social information. In \cite{NagarajaCaching}, the authors exploited the instantaneous demands of wireless users to estimate the content popularity and devise an optimal random caching strategy. The authors in \cite{EchoStateNetworks2} proposed an echo state network to predict the users' content request distribution and mobility patterns in CRANs. The work in \cite{ProactiveCaching2016} proposed a caching-based milimiter wave (mmWave) framework, which pre-caches video contents at the base stations for handover users to reduce the connection delay. In \cite{Semiari2015Context} and \cite{Zhang2015Social}, the authors proposed a novel resource allocation approach based on the social context of wireless users.  
However, most of these existing caching works\cite{Bigdata,SoysaPredicting,NagarajaCaching,EchoStateNetworks2,ProactiveCaching2016,Semiari2015Context,Zhang2015Social} were typically restricted to static networks without mobility and ultra dense users.
Note that, in these contributions \cite{Bigdata,SoysaPredicting,NagarajaCaching,EchoStateNetworks2,ProactiveCaching2016,Semiari2015Context,Zhang2015Social}, the cache content is stored at the terrestrial static base stations. However, the cache content at a static base station cannot be effectively used to serve the mobile users as they move outside the coverage range of the base station. In addition, while a mobile user moves to a new cell, its requested content may not be available at the new base station and, consequently, the users cannot be serviced properly. In such case, to serve the mobile user, one content needs to cached at multiple base stations which will not be efficient. Therefore, there is a need to track the users' mobility patterns so as to improve the caching efficiency. To this end, unmanned aerial vehicles (UAVs) can be used as flying base stations to dynamically cache the popular contents, track the mobility pattern of the corresponding users and, then, effectively serve them. In this case, due to the high altitude  and flexible deployment of the UAVs, they can establish reliable communication links to the users by mitigating the blockage effect.

The use of UAVs for enhancing wireless communications in cellular and ad hoc networks was studied in \cite{Mozaffari2016Unmanned,ThroughputMaximization, Letter, Modelingairtoground,bor,kalantari,MozaffariIoT}.
However, this existing literature \cite{Mozaffari2016Unmanned,ThroughputMaximization, Letter,Modelingairtoground,bor,kalantari,MozaffariIoT} was focused on performance analysis and did not consider prediction of user-centric patterns such as mobility nor does it study the use of UAVs for caching purposes. 
The prediction of the users' mobility patterns can enable the UAVs to effectively move and provide service for the ground users.
Moreover, when UAVs are considered within a CRAN system, the network must take into account the fact that the fronthaul links that connect the UAVs to the cloud will be capacity-limited. This is due to the fact that the bandwidth of the UAVs fronthaul links is limited. 
To overcome this limited-fronthaul capacity challenge, one can use content caching techniques to proactively download and cache content at the UAVs during off peak hours or when the UAVs are back at their docking stations. The use of caching enables the UAVs directly transmit the content to its requested user, thus reducing the fronthaul traffic load. 

The main contribution of this paper is to develop a novel framework that leverages user-centric information, such as content request distribution and mobility patterns, to effectively deploy cache-enabled UAVs while maximizing the users' QoE using a minimum total transmit power of the UAVs. The adopted QoE metric captures human-in-the-loop features such as transmission delay and the users' perceptions on the rate requirement, depending on their device type. In the proposed framework, the cloud can accurately predict the content request distribution and mobility patterns of each user. These predictions of user's behavior can then be used to find the optimal locations and content caching strategies for the UAVs. 
Unlike previous studies such as 
\cite{Bigdata,SoysaPredicting,NagarajaCaching,EchoStateNetworks2,ProactiveCaching2016} that predict the users' behavior using only one non-linear system, we propose a conceptor-based echo state network (ESN) approach to perform users' behavior prediction. Such an ESN model with conceptors enables the cloud to separate the users' behaviors into different patterns and learn these patterns independently, thus leading to a significant improvement in the accuracy of predictions. Moreover, unlike previous studies such as \cite{Mozaffari2016Unmanned, ThroughputMaximization, Letter,Modelingairtoground,kalantari,bor,MozaffariIoT} that consider the deployment of the UAVs assuming static users, we study the deployment of cache-enabled UAVs in CRANs with mobile users. In the proposed CRANs model, we derive the optimal user-UAV association, the optimal locations of the UAVs as well as the
content to cache at the UAVs.  
\emph{To our best knowledge, this work is the first to analyze the use of caching at the level of UAVs, given ESN-based predictions on the users' behavior}. To evaluate the performance of the proposed approach, we use real data from Youku for content requests as well as realistic measured mobility data from the Beijing University of Posts and Telecommunications for mobility simulations. Simulation results show that the proposed algorithm can yield 40\% gain in terms of the average transmit power of the UAVs compared to a baseline algorithm without cache. Moreover, the proposed algorithm can also yield 61\% gain in terms of the percentage of the users with satisfied QoE compared to a benchmark scenario without UAVs. 

The rest of this paper is organized as follows. The system model and problem formulation are presented in Section \uppercase\expandafter{\romannumeral2}. The conceptor ESN for content request distribution and mobility patterns predictions is proposed in Section \uppercase\expandafter{\romannumeral3}. The proposed approach for user-UAV association, content caching, and optimal location of each UAV is presented in Section \uppercase\expandafter{\romannumeral4}. In Section \uppercase\expandafter{\romannumeral5}, we provide numerical and simulation results. Finally, conclusions are drawn in Section \uppercase\expandafter{\romannumeral6}.  
     

\section{System Model and Problem Formulation}
\begin{figure}[!t]
  \begin{center}
   \vspace{0cm}
    \includegraphics[width=9cm]{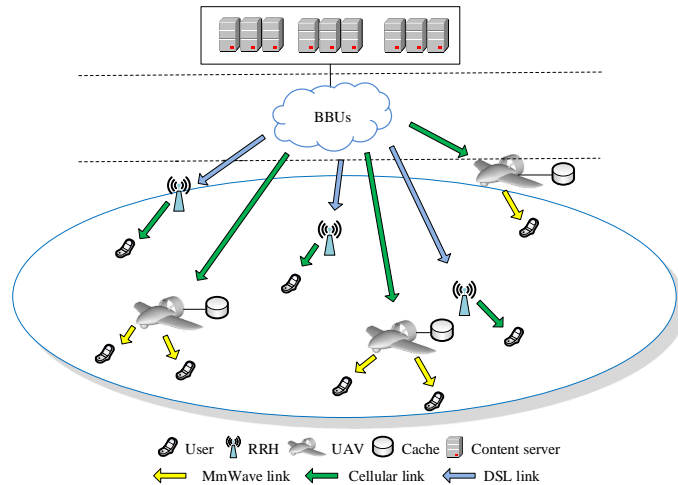}
    \vspace{-0.6cm}
    \caption{\label{CRAN} A CRAN with cache-enabled UAVs.}
  \end{center}\vspace{-1.2cm}
\end{figure}
\vspace{-0cm}
Consider the downlink of a CRAN system servicing a set $\mathcal{U}$ of $U$ mobile users via a set $\mathcal{R}$ of $R$ remote radio heads (RRHs) acting as distributed antennas.
The RRHs are grouped into $E$ clusters using \emph{K-mean clustering} approach \cite{H2008Clustering} so that zero-forcing beamforming (ZFBF)  \cite{Yoo2006On} 
can be used to service the users. In this system, a set $\mathcal{K} $ of $K$ UAVs equipped with cache storage units can be deployed to act as flying cache-enabled RRHs to serve the ground users along with the terrestrial RRHs. For the UAV-to-users communication links, since the high altitude of the UAVs can significantly reduce the blocking effect due to obstacles, we consider air-to-ground UAV transmissions using the millimeter wave (mmWave) frequency spectrum. Meanwhile, the terrestrial RRHs transmit over the cellular band and are connected to the cloud's pool of the baseband units (BBUs) via capacity-constrained, digital subscriber line (DSL) fronthaul links. Further, the cloud connects to the content servers via fiber backhaul links. The transmissions from the cloud to the UAVs occurs over wireless fronthaul links using the licensed cellular band. 
Consequently, the UAVs' wireless fronthaul links may interfere with the transmission links from the RRHs to the users when the user's requested content needs to be transmitted from the content server.

In our model, the content server stores a set ${\mathcal{N}}$ of all $N$ contents required by all users. The contents are of equal size $L$. Caching at the UAVs, referred to as ``UAV cache'' hereinafter, will be used to store the popular content that the  users request. By caching predicted content, 
the transmission delay from the content server to the UAVs can be significantly reduced as each UAV can directly transmit its stored content to the users.  Different from caching at the RRHs or BBUs, caching at UAVs allows servicing mobile users when their QoE requirement cannot be satisfied by the RRHs. 
We denote the set of $C$ cached contents in the storage units of UAV $k$ by $\mathcal{C}_k$, where $C \le N$ and $k \in \mathcal{K}$. {For simplicity, we assume that each user can request at most one content during each specified time slot $\tau$. We also let $\Delta\tau$ be the duration of time slot $\tau$ that also represents the maximum transmission duration of each content.} The maximum transmission duration $\Delta\tau$ is determined by the proposed algorithm in Section \ref{section3}. We assume that the content stored at the UAV cache will be refreshed every period that consists of $T$ time slots and this UAVs caching is performed at off peak hours when the UAVs return to their cloud-based docking stations for purposes such as battery charge.  
 Table \uppercase\expandafter{\romannumeral1} provides a summary of the notations used throughout this paper.

\begin{table}\footnotesize
  \newcommand{\tabincell}[2]{\begin{tabular}{@{}#1@{}}#2\end{tabular}}
\renewcommand\arraystretch{0.8}
 \caption{
    \vspace*{-0.5em}List of notations}\vspace*{-1.5em}
\centering  
\begin{tabular}{|c||c|c||c|}
\hline
Notation & Description & Notation & Description\\
\hline
$U$ & Number of users & $C$ & Number of the contents that are stored at UAV cache\\
\hline
$K$ & Number of UAVs & $F$ & Number of intervals in each time slot \\
\hline
 $R$ &  Number of RRHs  & $H$ & Number of time slots to collect user mobility \\
\hline
 $P_R$ & Transmit power of RRHs& $P _{t,ki}$& Transmitted power of UAV or RRH\\
\hline
$N$ &  Number of contents     &  $\tau$, $\Delta\tau$ & Time slot index, Time slot duration \\
\hline
$l_{t,ki}$ & Path loss of UAVs-users & $d_{t,ki}$ & Distance between RRHs or UAVs and users\\
\hline
$x_{\tau,k},y_{\tau,k},h_{\tau,k}$ & Coordinates of UAVs & $\delta_{S_i,n}$ & Rate requirement of device type \\
\hline
$L_{FS}$ &Free space path loss & $d_0$ & Free-space reference distance\\
\hline
$f_c$ & Carrier frequency &  $l _{t,ki}^F$& Path loss of fronthaul links\\
\hline
$\mu_\textrm{LoS}$, $\mu_\textrm{NLoS}$ & Path loss exponents & $\chi _{\sigma_\textrm{LoS}}, \chi _{\sigma_\textrm{NLoS}}$ & Shadowing random variable\\
\hline
$\gamma _{t,ki}^{\textrm{V}}, \gamma _{t,ki}^{\textrm{H}}$& SINR of user $i$ & $L_{t,k}^\textrm{LoS}$, $L_{t,k}^\textrm{NLoS}$ & LoS/NLoS path loss from the BBUs to UAV $k$\\
\hline
  $t$ & Small interval  & $l_{t,k}^\textrm{LoS}$, $l_{t,k}^\textrm{NLoS}$ & LoS/NLoS path loss from the to UAV $k$ to users\\
\hline
$c$ & Speed of light & $h_{t,ki}$ & Channel gains between the RRHs $k$ and user $i$\\
\hline
 $\bar D_{\tau,i,n}$ & Delay & $C_{\tau,ki}^F$& Fronthaul rate of UAV or RRH $k$\\
\hline
$C_{\tau,ki}^\textrm{V}$ & Rate of UAV-user link & $C_{\tau,qi}^\textrm{H}$& Rate of RRH-user link\\
\hline

$Q_{\tau,i,n}$& QoE of each user $i$& $T$& Number of time slots for caching update \\
\hline
$x_{t,i},y_{t,i}$ & Coordinates of users   & $P_B$ & Transmit power of the BBUs\\
\hline
\end{tabular}
 \vspace{-1cm}
\end{table} 

\subsection{Mobility Model}
In our system, we assume that the users can move continuously. 
In this case, we consider a realistic model for periodic, daily, and \emph{pedestrian mobility patterns} according to which each user will regularly visit a certain location of interest. For example, certain users will often go to the same office for work at the same time during weekdays. The locations of each user are collected by the BBUs once every $H$ time slots. Here, the duration of $H$ time slots is considered a period of one hour for each user. In addition, we assume that each user moves between two collected locations at a constant speed. The mobility pattern for each user will then be used to determine the content that must be cached as well as the optimal location of each UAV which will naturally impact the QoE of each user. 

In this model, the associations of the mobile users with the UAVs or the RRHs can change depending on the QoE requirement. 
Since the users are moving continuously, the locations of the UAVs must change  accordingly so as to serve the users effectively. However, for tractability, we assume that the UAVs will remain static during each content transmission. In essence, the UAVs will update their locations according to the mobility of the users after each content transmission is complete at a current location. 
 \vspace{-0.5cm}
\subsection{Transmission Model}
Next, we introduce the models for transmission links between BBUs and UAVs, UAVs and users, and RRHs and users. For ease of exposition, a time slot $\tau$ is discretized into $F$ equally spaced time intervals $t$, i.e., $\Delta\tau = Ft$. The time interval $t$ is chosen to be sufficiently small so that each user's location can be considered constant during $t$ as in \cite{ThroughputMaximization} and \cite{Abou2013Optimal}. 

\subsubsection{UAVs-Users Links} The mmWave propagation channel of the UAVs-user link  is modeled using the standard log-normal shadowing model of \cite{WirelessCommunications}. The standard log-normal shadowing model can be used to model the line-of-sight (LoS) and non-line-of-sight (NLoS) links by choosing specific channel parameters. Therefore, the LoS and NLoS path loss of UAV $k$ located at $\left(x_{\tau,k},y_{\tau,k},h_{\tau,k}\right)$ transmitting a content to user $i$ at interval $t$ of time slot $\tau$ is \cite{BroadbandMillimeterWave2013} (in dB):
\begin{equation}
\setlength{\abovedisplayskip}{3 pt}
\setlength{\belowdisplayskip}{3 pt}
l_{t,ki}^{\textrm{LoS}}\left( \boldsymbol{w}_{\tau,t,k},\boldsymbol{w}_{\tau,t,i}  \right) = {L_{FS}}\left( {{d_0}} \right) + 10\mu_{\textrm{LoS}}\log \left( {{d_{t,ki}}\left(  \boldsymbol{w}_{\tau,t,k},\boldsymbol{w}_{\tau,t,i}  \right)} \right) + {\chi _{\sigma_\textrm{LoS}} },
\end{equation}
\begin{equation}
l_{t,ki}^{\textrm{NLoS}}\left(\boldsymbol{w}_{\tau,t,k},\boldsymbol{w}_{\tau,t,i}  \right)\! = \!{L_{FS}}\left( {{d_0}} \right) \!+ \!10\mu_{\textrm{NLoS}}\log \left( {{d_{t,ki}}\left(  \boldsymbol{w}_{\tau,t,k},\boldsymbol{w}_{\tau,t,i}  \right)} \right)\! +\! {\chi _{\sigma_\textrm{NLoS}} },
\end{equation}
where $\boldsymbol{w}_{\tau,t,k}=\left[ {{x_{\tau,k}},{y_{\tau,k}},h_{\tau,k}} \right]$ is the coordinate of UAV $k$ during time slot $\tau$ with $h_{\tau,k}$ being the altitude of UAV $k$ at time slot $\tau$. Also, $\boldsymbol{w}_{\tau,t,k}=\left[ {{x_{t,i}},{y_{t,i}}} \right]$ is the time-varying coordinate of user $i$ at interval $t$. $L_{FS}\left(d_0\right)$ is the free space path loss given by $20\log \left( {{{d_0f_c4\pi } \mathord{\left/
 {\vphantom {{4\pi } c}} \right.
 \kern-\nulldelimiterspace} c}} \right)$ with $d_0$ being the free-space reference distance, $f_c$ being the carrier frequency and $c$ being the speed of light. ${{d_{t,ki}}\left( \boldsymbol{w}_{\tau,t,k},\boldsymbol{w}_{\tau,t,i} \right)}=\sqrt {\left(x_{t,i}-x_{\tau,k}\right)^2 + \left(y_{t,i}-y_{\tau,k}\right)^2 + h_{\tau,k}^2}$ is the distance between user $i$ and UAV $k$. $\mu_\textrm{LoS}$ and $\mu_\textrm{NLoS}$ are the path loss exponents for LoS and NLoS links. ${\chi _{\sigma_\textrm{LoS}} }$ and ${\chi _{\sigma_\textrm{NLoS}} }$ are the shadowing random variables which are, respectively, represented as the Gaussian random variables in dB with zero mean and $\sigma_\textrm{LoS}$, $\sigma_\textrm{NLoS}$ dB standard deviations. 
 
In our model, the probability of LoS connection depends on the environment, density and height of buildings, the locations of the user and the UAV, and the elevation angle between the user and the UAV. The LoS probability is given by \cite{Mozaffari2016Unmanned} and \cite{Modelingairtoground}:
\begin{equation}\label{eq:propability}
\setlength{\abovedisplayskip}{3 pt}
\setlength{\belowdisplayskip}{3 pt}
\Pr \left( {l_{t,ki}^\textrm{LoS}} \right) = {\left( {1 + X\exp \left( { - Y\left[ {\phi_t   - X} \right]} \right)} \right)^{ - 1}},
\end{equation}
where $X$ and $Y$ are constants which depend on the environment (rural, urban, dense urban, or others) and $\phi_t  = {\sin ^{ - 1}}\left( {{h_{\tau,k}}/d_{t,ki}\left( \boldsymbol{w}_{\tau,t,k},\boldsymbol{w}_{\tau,t,i}\right) } \right)$ is the elevation angle. Clearly, the average path loss from the UAV $k$ to user $i$ at interval $t$ is \cite{Modelingairtoground}:
\begin{equation}\label{eq:averl}
\setlength{\abovedisplayskip}{3 pt}
\setlength{\belowdisplayskip}{3 pt}
\bar l_{t,ki}\left( \boldsymbol{w}_{\tau,t,k},\boldsymbol{w}_{\tau,t,i}\right) = \Pr \left( {l_{t,ki}^\text{LoS}} \right) \times {l_{t,ki}^\text{LoS}}+ \Pr \left( {{l_{t,ki}^\textrm{NLoS}}} \right) \times {l_{t,ki}^\textrm{NLoS}},
\end{equation}
where $\Pr \left( {l_{t,ki}^\textrm{NLoS}}\right)=1-\Pr \left( {l_{t,ki}^\textrm{LoS}}\right)$.  
Based on the path loss, the average signal-to-noise ratio (SNR)
of user $i$ located at $\boldsymbol{w}_{\tau,t,i}$ from the associated UAV $k$ at interval $t$ is given by:
 \begin{equation}\label{eq:sir}
 \setlength{\abovedisplayskip}{3 pt}
\setlength{\belowdisplayskip}{3 pt}
{\gamma _{t,ki}^{\textrm{V}}} =\frac{{{P_{t,ki}}}}{{{{10}^{{{{\bar l_{t,ki}}\left( \boldsymbol{w}_{\tau,t,k},\boldsymbol{w}_{\tau,t,i} \right)} \mathord{\left/
 {\vphantom {{{l_{t,ki}}\left( \boldsymbol{V}_{\tau,t,ki} \right)} {10}}} \right.
 \kern-\nulldelimiterspace} {10}}}}{\sigma ^2}}},
\end{equation}
where $P_{t,ki}$ is the transmit power of UAV $k$ to the user $i$ at time $t$, and $\sigma^2$ is the variance of the Gaussian noise. We assume that the total bandwidth available for each UAV is $B_V$ which is equally divided among the associated users. The channel capacity between UAV $k$ and user $i$ for each content transmission will be ${C_{\tau,ki}^\textrm{V}}=\sum\limits_{t = 1}^{{F}} {\frac{{{B_V}}}{{{U_k}}}{{\log }_2}\left( {1 + {\gamma _{t,ki}^{\textrm{V}}}} \right)}$ where $U_k$ is the number of the users associated with UAV $k$.   

\subsubsection{BBUs-UAVs Ground-to-Air Links}

For the BBUs-UAVs (ground-to-air) link, we consider probabilistic LoS and NLoS links over the licensed band. Since the distance of the UAVs fronthaul link may be larger compared to the distance of the UAV-user link, the cellular band can provide a more reliable transmission and a smaller path loss compared to the mmWave channel. In such a model, NLoS links experience higher attenuations than LoS links due to the shadowing and diffraction loss. The LoS and NLoS path loss from the BBUs to UAV $k$ at time $t$ of time slot $\tau$ can be given by \cite{Mozaffari2016Unmanned}:
\begin{equation}
\setlength{\abovedisplayskip}{3 pt}
\setlength{\belowdisplayskip}{3 pt}
{L_{t,k}^\text{LoS}} ={{d_{t,ki}}\left( \boldsymbol{w}_{\tau,t,k},\boldsymbol{w}_{\tau,t,B} \right)}^{-\beta},
\end{equation}
\begin{equation}
\setlength{\abovedisplayskip}{3 pt}
\setlength{\belowdisplayskip}{3 pt}
{L_{t,k}^\text{NLoS}}=\eta{{d_{t,ki}}\!\left( \boldsymbol{w}_{\tau,t,k},\boldsymbol{w}_{\tau,t,B} \right)}^{-\beta},
\end{equation}
where $\boldsymbol{w}_{\tau,t,B}=\left[ {{x_{B}},{y_{B}}} \right]$ is the location of the BBUs, and $\beta$ is the path loss exponent. The LoS connection probability and the average SNR of the link between the BBUs and UAV $k$ can be calculated using (\ref{eq:propability})-(\ref{eq:sir}). 



\subsubsection{RRHs-Users Links} In our model, RRHs are grouped into $E$ clusters. Then, the RRHs in each cluster use ZFBF to improve the users' rates. The received signals of the users
associated with RRHs cluster $q$ at interval $t$ is:
\begin{equation}
 \setlength{\abovedisplayskip}{3 pt}
\setlength{\belowdisplayskip}{3 pt}
{\boldsymbol{b}_{t,q}} =\sqrt{P_R} {\boldsymbol{H}_{t,q}}{\boldsymbol{F}_{t,q}}{\boldsymbol{a}_{t,q}} +\boldsymbol{n},
\end{equation}
where ${\boldsymbol{H}_{t,q}} \in {\mathbb{R}^{{U_q} \times R_q}}$ is the path loss matrix with $U_q$ being the number of users
associated with RRH cluster $q$, and $R_q$ is the number of RRHs' antennas. $P_R$ is the transmit power of each RRH which is assumed to be equal for all RRHs. ${\boldsymbol{a}_{t,q}} \in {\mathbb{R}^{{U_q} \times 1}}$ is the transmitted content at interval $t$ and ${\boldsymbol{n}_{t,q}} \in {\mathbb{R}^{{U_q} \times 1}}$ is the noise power. Also, ${\boldsymbol{F}_{t,q}} = \boldsymbol{H}_{t,q}^{{\rm H}}\left( \boldsymbol{H}_{t,q}\boldsymbol{H}_{t,q}^{{\rm H}}\right)^{-1} \in {\mathbb{R}^{{R_q} \times {U_q}}}$ is the beamforming matrix \cite{Somekh2009Cooperative}. We also assume that the bandwidth of each user associated wth the RRHs is $B$. Then, the received signal-to-interference-plus-noise-ratio (SINR) of user $i$ in cluster $\mathcal{M}_q$ at interval $t$ will be:  
\begin{equation}\label{eq:sinrR}
 \setlength{\abovedisplayskip}{3 pt}
\setlength{\belowdisplayskip}{3 pt}
{\gamma _{t,qi}^{\textrm{H}}} = \frac{{P_R{{\left\| {{\boldsymbol{h}_{t,qi}}{\boldsymbol{f}_{t,qi}}} \right\|}^2}}}{{\underbrace {\sum\limits_{j = 1,j \ne q}^N{\sum\limits_{l \in \mathcal{M}_j  }\sum\limits_{u \in \mathcal{U}_j} {P_R{{\left\| {{{{\boldsymbol{h}_{t,li}}{\boldsymbol{f}_{t,lu}}}}} \right\|}^2}} }}_{\text{other cluster RRHs interference}}
+\underbrace {{{P_B}{g_{t,Bi}}} {d_{t,Bi}}{\left( {{x_{B}},{y_{B}},{x_{t,i}},{y_{t,i}}} \right)^{ - \beta }}}_{\text{wireless fronthaul interference}}
+{\sigma ^2}}},
\end{equation}
where $\mathcal{M}_j$ is the set of the RRHs in group $j$, $\mathcal{U}_j$ is the set of the users associated with the RRHs in group $j$, $\boldsymbol{h}_{t,qi} \in {\mathbb{R}^{1 \times R_q}}$ is the channel gain between the RRHs in cluster $\mathcal{M}_q$ and user $i$ with $h_{t,qi}=g_{t,qi}d_{t,qi}\left(x_i,y_i\right)^{-\beta}$, $g_{t,qi}$ is the Rayleigh fading parameter at interval $t$, and $d_{t,qi}\left(x_i,y_i\right)=\sqrt {{{\left( {{x_{t,q}} - {x_{t,i}}} \right)}^2} + {{\left( {{y_{t,q}} - {y_{t,i}}} \right)}^2}} $ is the distance between RRH $q$ and user $i$ at interval $t$. $\boldsymbol{f}_{t,qi} \in {\mathbb{R}^{R_q \times 1}}$ is the beamforming vector. Given (9), the channel capacity between RRH cluster $\mathcal{M}_q$ and user $i$ for each content transmission is: 
\begin{equation}\label{eq:rate}
\setlength{\belowdisplayskip}{2 pt}
{C_{\tau,qi}^{\textrm{H}}}=\sum\limits_{t = 1}^{{F}} {B{{\log }_2}\left( {1 + {\gamma _{t,qi}^{\textrm{H}}}} \right)}. 
\end{equation}


\subsection{Quality-of-Experience Model}
Given the proposed models in the previous subsections, here, we present the QoE model for each user. The \emph{quality-of-experience} of each user is formally defined as a concrete human-in-the-loop metric that captures each user's data rate, delay, and device type.

\subsubsection{Delay}In the considered CRAN system, contents can be transmitted to the users via three types of links: (a) content server-BBUs-RRHs-user, (b) content server-BBUs-UAV-user, and (c) UAV cache-user. The backhaul link connecting the cloud to the core network is assumed to be fiber and, therefore, its delay is neglected. We assume that the capacity of the wired fronthaul links between the BBUs and the RRHs is limited to a maximum rate of $v_F$ for all users. Consequently, the fronthaul rate for each user receiving a content from the RRHs will be ${v_{FU}} = {{{v_F}} \mathord{\left/
 {\vphantom {{{v_F}} {{N_{FR}}}}} \right.
 \kern-\nulldelimiterspace} {{N_{FR}}}}$ with $N_{FR}$ being the number of the users receive contents from the RRHs. Thus, the delay of a user $i$ receiving content $n$ over the three types of links at each time slot $\tau$ can be written as:
\begin{equation}\label{eq:delay}
{D_{\tau ,i,n}} = \left\{ {\begin{array}{*{20}{c}}
{\begin{array}{*{20}{c}}
{\frac{L}{{{v_{FU}}}} + \frac{L}{{{C_{\tau ,qi}^{\textrm{H}}}}}}, &{\;\;\;\;\;\;\;\;\;\text{link} \left( a \right),}
\end{array}}\\
{\begin{array}{*{20}{c}}
{\frac{L}{{C_{\tau ,k}^F}} + \frac{L}{{{C_{\tau ,ki}^\textrm{V}}}}}, &{\;\;\;\;\;\;\;\;\;\text{link} \left( b \right),}
\end{array}}\\
{\begin{array}{*{20}{c}}
{\frac{L}{{{C_{\tau ,ki}^\textrm{V}}}}}, &{\;\;\;\;\;\;\;\;\;\;\;\;\;\;\;\;\;\;\;\;\text{link}\left( c \right),}
\end{array}}
\end{array}} \right.
\end{equation}
where ${{C_{\tau ,k}^F}}$ is the rate of content transmission from the BBUs to UAV $k$ which is calculated analogously to (\ref{eq:sinrR}) and (\ref{eq:rate}). Next, we derive the lower bound on the delay that each user can tolerate for each content transmission.
\begin{proposition}\label{pro1}
\emph{The lower bound of the delay for each user $i$ receiving content $n$ are given by:}
\begin{equation}
\setlength{\abovedisplayskip}{0 pt}
\setlength{\belowdisplayskip}{0 pt}
\min \left\{ {\frac{L}{{{v_{F}}}},\frac{L}{{C_K^{\max }}}} \right\} \le {D_{\tau ,i,n}},
\end{equation}
\emph{where $ C_K^{\max }=B_V{\log _2}\left( {1 + \frac{{{P_{\max}}}}{{{{10}^{{{\left( {{L_{FS}}\left( {{d_0}} \right) + 10\mu_{\textrm{LoS}} \log \left( h_{\min} \right)}-4\sigma_{\textrm{LoS}} \right)} \mathord{\left/
 {\vphantom {{\left( {{L_{FS}}\left( {{d_0}} \right) + 10\mu \log \left( h \right)} \right)} {10}}} \right.
 \kern-\nulldelimiterspace} {10}}}}{\sigma ^2}}}} \right)$ with $P_{\max}$ being the maximum transmit power of each UAV, and $h_{\min}$ being the minimum altitude of the UAV.}
\end{proposition}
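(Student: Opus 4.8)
The plan is to bound $D_{\tau,i,n}$ from below on each of the three link types appearing in \eqref{eq:delay} and then take the smaller of the resulting bounds. On link $(a)$ the delay equals $\frac{L}{v_{FU}}+\frac{L}{C_{\tau,qi}^{\textrm{H}}}$; since $C_{\tau,qi}^{\textrm{H}}>0$ and $v_{FU}=v_F/N_{FR}\le v_F$ (because at least one user, $N_{FR}\ge 1$, is served from the RRHs whenever link $(a)$ is used), this is at least $\frac{L}{v_{FU}}\ge\frac{L}{v_F}$. On links $(b)$ and $(c)$ the delay is at least $\frac{L}{C_{\tau,ki}^{\textrm{V}}}$ — in case $(b)$ one simply drops the nonnegative term $\frac{L}{C_{\tau,k}^F}$ — so the whole problem reduces to producing the upper bound $C_{\tau,ki}^{\textrm{V}}\le C_K^{\max}$.

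To obtain that bound I would estimate each factor in $C_{\tau,ki}^{\textrm{V}}=\sum_{t=1}^{F}\frac{B_V}{U_k}\log_2(1+\gamma_{t,ki}^{\textrm{V}})$ in the worst-case direction, using monotonicity of $\log_2(1+\cdot)$: take $U_k\ge 1$ and replace $P_{t,ki}$ by $P_{\max}$ in \eqref{eq:sir}, so that the only remaining work is a lower bound on the average path loss $\bar l_{t,ki}$. Here I would note that $d_{t,ki}(\boldsymbol{w}_{\tau,t,k},\boldsymbol{w}_{\tau,t,i})=\sqrt{(x_{t,i}-x_{\tau,k})^2+(y_{t,i}-y_{\tau,k})^2+h_{\tau,k}^2}\ge h_{\tau,k}\ge h_{\min}$, so, using $\mu_{\textrm{LoS}}\le\mu_{\textrm{NLoS}}$ and $h_{\min}\ge 1$, the distance-dependent term is at least $10\mu_{\textrm{LoS}}\log(h_{\min})$ in both the LoS and the NLoS expression, while the shadowing term is at least $-4\sigma_{\textrm{LoS}}$. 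Since $\bar l_{t,ki}$ is a convex combination of $l_{t,ki}^{\textrm{LoS}}$ and $l_{t,ki}^{\textrm{NLoS}}$, it then follows that $\bar l_{t,ki}\ge L_{FS}(d_0)+10\mu_{\textrm{LoS}}\log(h_{\min})-4\sigma_{\textrm{LoS}}$, hence $\gamma_{t,ki}^{\textrm{V}}\le P_{\max}/(10^{(L_{FS}(d_0)+10\mu_{\textrm{LoS}}\log(h_{\min})-4\sigma_{\textrm{LoS}})/10}\sigma^2)$ and therefore $C_{\tau,ki}^{\textrm{V}}\le C_K^{\max}$.

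Combining the two cases yields $D_{\tau,i,n}\ge\min\{L/v_F,\,L/C_K^{\max}\}$ irrespective of which of the three links serves user $i$, which is exactly the claimed inequality.

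I expect the only delicate step to be the lower bound on $\bar l_{t,ki}$. The shadowing variables $\chi_{\sigma_{\textrm{LoS}}}$, $\chi_{\sigma_{\textrm{NLoS}}}$ are zero-mean Gaussians and hence unbounded below, so ``$\chi\ge -4\sigma_{\textrm{LoS}}$'' should be read as a high-probability truncation rather than an almost-sure bound; moreover, one must check that using the LoS-type lower bound for the NLoS path loss is legitimate, i.e.\ that over the operating range $d\ge h_{\min}$ the extra exponent $10(\mu_{\textrm{NLoS}}-\mu_{\textrm{LoS}})\log d$ more than compensates for any difference between the LoS and NLoS shadowing offsets. Everything after that — the manipulations for links $(a)$, $(b)$, $(c)$ and the final $\min$ — is routine.
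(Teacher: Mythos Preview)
Your proposal is correct and follows essentially the same approach as the paper: separate link $(a)$ from links $(b)$/$(c)$, bound the former by $L/v_F$, and for the latter upper-bound $C_{\tau,ki}^{\textrm V}$ by pushing $P_{t,ki}\to P_{\max}$, $d_{t,ki}\to h_{\min}$, and truncating the Gaussian shadowing at $-4\sigma_{\textrm{LoS}}$. You are in fact slightly more careful than the paper, making explicit the convex-combination argument for $\bar l_{t,ki}$, the auxiliary hypotheses $\mu_{\textrm{LoS}}\le\mu_{\textrm{NLoS}}$ and $h_{\min}\ge 1$, and the high-probability (rather than almost-sure) nature of the shadowing bound, all of which the paper uses implicitly.
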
  
\begin{proof} 
From (\ref{eq:delay}), we can see that the delay of link (b) is larger than that of link (c) and the minimum delay of the link (a) is ${L \mathord{\left/
 {\vphantom {L {{v_{FU}}}}} \right.
 \kern-\nulldelimiterspace} {{v_{F}}}}$. Hence, we only need to consider the delay values between ${L \mathord{\left/
 {\vphantom {L {{v_{FU}}}}} \right.
 \kern-\nulldelimiterspace} {{v_{FU}}}}$ and ${L \mathord{\left/
 {\vphantom {L {{C_{\tau ,ki}}}}} \right.
 \kern-\nulldelimiterspace} {{C_{\tau ,ki}^\textrm{V}}}}$, $k \in \mathcal{K}$. To maximize ${C_{\tau ,ki}^\textrm{V}}$, we consider ${{d_{t,ki}}\left( {\boldsymbol{w}_{\tau,t,k},\boldsymbol{w}_{\tau,t,i}} \right)=h}$, and $P_{t,ki}=P_{\max}$. Then, the rate of the UAV-user link ${ C_{\tau ,ki}^\textrm{V}}$ is given by: 
\begin{equation}\small\label{eq:proof11}
\begin{split}
{ C_{\tau ,ki}^\textrm{V}} &=F {B_V{{\log }_2}\left( {1 + \frac{{{P_{t,ki}}}}{{{{10}^{{{{\bar l_{t,ki}}\left( {\boldsymbol{w}_{\tau,t,k},\boldsymbol{w}_{\tau,t,i}} \right)} \mathord{\left/
 {\vphantom {{{l_{t,ki}^\textrm{LoS}}\left( {\boldsymbol{w}_{\tau,t,k},\boldsymbol{w}_{\tau,t,i}} \right)} {10}}} \right.
 \kern-\nulldelimiterspace} {10}}}}{\sigma ^2}}}} \right)} \\
 &\le F {B_V{{\log }_2}\left( {1 + \frac{{{P_{\max}}}}{{{{10}^{{{{l_{t,ki}^\textrm{LoS}}\left( {\boldsymbol{w}_{\tau,t,k},\boldsymbol{w}_{\tau,t,i}} \right)} \mathord{\left/
 {\vphantom {{{l_{t,ki}^\textrm{LoS}}\left( {\boldsymbol{w}_{\tau,t,k},\boldsymbol{w}_{\tau,t,i}} \right)} {10}}} \right.
 \kern-\nulldelimiterspace} {10}}}}{\sigma ^2}}}} \right)}\\
&  \mathop  \le \limits^{\left( a \right)} FB_V{\log _2}\left( {1 + \frac{{{P_{\max}}}}{{{{10}^{{{\left( {{L_{FS}}\left( {{d_0}} \right) + 10\mu_\textrm{LoS} \log \left( h \right) - {{{{4\sigma _\textrm{LoS}}}}}} \right)} \mathord{\left/
 {\vphantom {{\left( {{L_{FS}}\left( {{d_0}} \right) + 10\mu \log \left( h \right) + E\left[ {{\chi _{{\sigma _1}}}} \right]} \right)} {10}}} \right.
 \kern-\nulldelimiterspace} {10}}}}{\sigma ^2}}}} \right),\\
\end{split}
\end{equation}
where (a) follows from the fact that with a probability close to one (greater than $99.99\%$), the Gaussian random variable $\chi_{\sigma_\textrm{LoS}}$ will have a value larger than $-4\sigma_\textrm{LoS}$. From (\ref{eq:proof11}), we can see that, as $h$ increases, the capacity ${ C_{\tau ,ki}^\textrm{V}}$ decreases. Therefore, we set $h=h_{\min}$. This completes the proof.        
\end{proof}
From Proposition \ref{pro1}, we can see that the minimum delay of each user depends on the rate of the fronthaul links and the maximum transmit power of the UAVs. Therefore, we can improve the QoE of each user by adjusting the UAV's transmit power. In particular, as the number of users increases and the rate of fronthaul links decreases, the QoE requirement of users can be satisfied by adjusting the UAVs' transmit power.
{Note that, the upper bound of the delay $\Delta\tau$ is set by the system requirement.}      
Using the results of Proposition \ref{pro1}, we can categorize the sensitivity to the delay into five groups using the popular mean opinion score (MOS) model \cite{ContextAwareMitra} which is often used to measure the QoE of a wireless user. The mapping between delay and MOS model \cite{ContextAwareMitra} is given by: 
\begin{equation}\label{eq:delayD}
{\bar D_{\tau ,i,n}}=\frac {{\Delta\tau  - {D_{\tau ,i,n}}} } {{ {\Delta\tau  - \min \left\{ {\frac{L}{{{v_F}}},\frac{L}{{C_K^{\max }}}} \right\}} }},
 \end{equation}
 which is shown in Table \ref{tb:3}.


\subsubsection{Device Type} The screen size of each device type of the user will also affect the QoE perception of the user, especially for video-oriented applications. Indeed, users who own devices that have larger screens (such as tablets) will be more sensitive to QoE compared to those who own smaller devices (such as small smartphones). We capture the impact of the screen size of each user $i$ using a parameter $S_i$ that reflects the diameter length of the user's device. Typically, devices with a larger screen size, can display content at a higher resolution thus requiring a higher data rate. We assume that the rate requirement of user $i$ with device $S_i$ receiving a content $n$ at interval $t$ is $\delta _{S_i,n}=S_i{\hat C_{n}}$, where ${\hat C_{n}}$ is the rate requirement of each user receiving content $n$. The mapping from the rate requirement of user device to the MOS model can be written as:
\begin{equation}\label{eq:device}
V_{t,i}=\left\{ {\begin{array}{*{20}{c}}
{1,\;\;\;j \ge {\delta _{S_i,n}}},\\
{0,\;\;\;j < {\delta _{S_i,n}}},\\
\end{array}} \right. 
\end{equation}
where $j \in \left\{ {C_{t,ki}^\textrm{V}},{C_{t,qi}^\textrm{H}} \right\}$. From (\ref{eq:device}), we can see that the device type score can be equal to 1 or 0 meaning that the MOS can be ``Excellent'' or ``Poor''. 
 The QoE of each user $i$ receiving content $n$ at time slot $\tau$ can be given by \cite{ContextAwareMitra}:
\begin{equation}
\setlength{\abovedisplayskip}{3pt}
\setlength{\belowdisplayskip}{3 pt}
 Q_{\tau,i,n}= {{{\zeta_1} {{\bar D_{\tau ,i,n}} } + {\zeta_2}\sum\limits_{t = 1}^{{F}} {{V_{t,i}} }}}, 
\end{equation}
where $q_1$ and $q_2$ are weighting parameters with $\zeta_1+\zeta_2 = 1$.

\begin{table}\footnotesize
  \newcommand{\tabincell}[2]{\begin{tabular}{@{}#1@{}}#2\end{tabular}}
\renewcommand\arraystretch{0.7}
 \caption{
    \vspace*{-0.5em}Mean Opinion Score Model \cite{ContextAwareMitra}}\vspace*{-1.5em}
\centering  
\begin{tabular}{|c|c|c|c|c|c|c|}
\hline
QoE & Poor & Fair & Good & Very Good & Excellent \\
\hline
Interval scale & 0-0.2 & 0.2-0.4 & 0.4-0.6 & 0.6-0.8 &0.8-1\\
\hline
\end{tabular}\label{tb:3}
 \vspace{-0.9cm}
\end{table}

\subsection{Problem Formulation}
Here, we first find the minimum rate required to meet the QoE requirement of each user associated with the UAVs. Next, we determine the minimum transmit power of each UAV required to meet the QoE threshold of the associated users. Finally, we formulate the minimization problem. 
From Table \ref{tb:3}, we can see that, for $0.8 \le \bar D_{\tau,i,n} \le 1$, the MOS of  delay will be ``Excellent'', which means that the delay is minimized. In this case, $\bar D_{\min}=0.8$ is the minimum value that maximizes the delay component of user $i$'s QoE, during the transmission of a given content $n$. We define the rate that achieves the optimal delay as the delay rate requirement and also, define the rate that meets the rate requirement of device as the device rate requirement.
Consider the transmission between a UAV $k$ located at $\boldsymbol{w}_{\tau,t,k}$ and a user $i$ located at coordinates $\boldsymbol{w}_{\tau,t,i}$. 
From (\ref{eq:delay}), the delay rate requirement for UAV $k$ transmitting content $n$ to user $i$ at time slot $\tau$ is: 
 \begin{equation}\label{eq:ctaoki}
 C_{\tau ,ki,n}^R = \left\{ {\begin{array}{*{20}{c}}
{\frac{L}{{{}\left( {\Delta\tau  - {{\bar D}_{\min }}\left( {\Delta\tau  - \min \left\{ {\frac{L}{{{v_F}}},\frac{L}{{C_k^{\max }}}} \right\}} \right) - \frac{L}{{C_{\tau ,k}^F}}} \right)}},n \notin {\mathcal{C}_k},}\\
{\frac{L}{{{}\left( {\Delta\tau  - {{\bar D}_{\min }}\left( {\Delta\tau  - \min \left\{ {\frac{L}{{{v_F}}},\frac{L}{{C_k^{\max }}}} \right\}} \right)} \right)}}, \:\:\:\:\:\:\:\:\:\:\:n \in {\mathcal{C}_k}.}
\end{array}} \right.
 \end{equation}
From (\ref{eq:ctaoki}), we can see that, by storing content $n$ at cache of UAV $k$, the delay rate requirement for minimizing delay decreases.

Let ${\delta _{{S_i},n}}$ be the device rate requirement of user $i$ associated with a UAV. Clearly, the QoE is maximized when ${C_{t ,ki}^\textrm{V}} \ge \max \left\{ {{{C_{\tau ,ki,n}^R} \mathord{\left/
 {\vphantom {{C_{\tau ,ki}^R} {{F}}}} \right.
 \kern-\nulldelimiterspace} {{F}}},{\delta _{{S_i},n}}} \right\}$. Hence, the minimum rate required to maximize the user's QoE is $\delta_{i,n}^R=\max \left\{{{{{C_{\tau ,ki,n}^R} \mathord{\left/
 {\vphantom {{C_{\tau ,ki,n}^R} {{F}}}} \right.
 \kern-\nulldelimiterspace} {{F}}}},{\delta _{{S_i},n}}} \right\}$. Based on (\ref{eq:sir}), the minimum transmit power needed to guarantee the QoE requirement of user $i$ receiving content $n$ at interval $t$ is:
\begin{equation}\label{eq:Pki}
{P_{t,ki}^{\min}\left(\boldsymbol{w}_{\tau,t,k}, \delta_{i,n}^R, n\right)}\! =\! \left( {{2^{{\delta_{i,n}^RU_k  \mathord{\left/
 {\vphantom {\delta  B}} \right.
 \kern-\nulldelimiterspace} B_V}}} \!- \!1} \right)\!{\sigma ^2}
{{10}^{{{\bar l_{t,ki}\left( \boldsymbol{w}_{\tau,t,k},\boldsymbol{w}_{\tau,t,i} \right) } \mathord{\left/
 {\vphantom {{l_{t,ki}\left( {{x_{\tau,k}},{y_{\tau,k}},h,{x_{t,i}},{y_{t,i}}} \right) } {10}}} \right.
 \kern-\nulldelimiterspace} {10}}}}. 
\end{equation}
From (\ref{eq:Pki}), we can see that the minimum transmit power of UAV $k$ transmitting content $n$ to user $i$ depends on the UAV's location, the rate needed to satisfy the QoE requirement of user $i$, and the transmitted content $n$.     

Given this system model, our goal is to find an effective deployment of cache-enabled UAVs to enhance the QoE of each user while minimizing the transmit power of the UAVs. This problem involves predicting the content request distribution and periodic locations for each user, finding the optimal contents to cache at the UAVs, determining the users' associations and adjusting the locations\footnote{\vspace{-0.1cm}{Typically, the speed of a UAV can reach up to 30 m/s while the average speed of each pedestrian ground user is less than 2 m/s. Therefore, in our model, we ignore the time duration that each UAV uses to change its location.}}, and transmit power of the UAVs. This problem can be formulated as follows:
\addtocounter{equation}{0}
\begin{equation}\label{eq:sum}
\begin{split}
&\mathop {\min }\limits_{{\mathcal{C}_{k}},\mathcal{U}_{\tau,k}, \boldsymbol{w}_{\tau,t,k}}{\sum\limits_{\tau = 1}^{{T}}\sum\limits_{k \in \mathcal{K}}\sum\limits_{i \in {\mathcal{U}_{\tau,k}}} \sum\limits_{t = 1}^{{F}} {{{\!P_{\tau,t,ki}^{\min}\left(\boldsymbol{w}_{\tau,t,k},\delta_{i,n}^R, n_{\tau,i}\right)}}}},
\end{split}
\end{equation}
\vspace{-0.5cm}
\begin{align}\label{c1}
\;\;\;\;\;\text{s. t.}\;\; &\scalebox{1}{$ 
{h_{\min }} \le h_{\tau,k},k \in \mathcal{K},$}
\tag{\theequation a}\\
&\scalebox{1}{$ i \ne j, i, j \in {{\mathcal{C}_k}},\mathcal{C}_k \subseteq \mathcal{N},k \in \mathcal{K},$} \tag{\theequation b}\\
&\scalebox{1}{$0<{P_{\tau,t,ki}^{\min}} \le P_{\max},i \in \mathcal{U},k \in \mathcal{K}$,} \tag{\theequation c}
\end{align}
where $P_{\tau,t,ki}^{\min}$ is the minimum transmit power of UAV $k$ to user $i$ at interval $t$ during time slot $\tau$.
$n_{\tau,i}$ is the content that user $i$ requests at time slot $\tau$, $\mathcal{U}_{\tau,k}$ is the set of the users that are associated with UAV $k$ at time slot $\tau$. $h_{\min}$ is the minimum altitude that each UAV can reach at time slot $\tau$. Here, (\ref{eq:sum}b) captures the fact that each cache storage unit at the UAV stores a single, unique content, and (\ref{eq:sum}c) indicates that the transmit power of the UAVs should be minimized. 
Since the problem as per (\ref{eq:sum}) is to satisfy the rate needed for meeting each user's QoE requirement during the next time period $T$, the predictions of users behavior will directly impact the solution. From (\ref{eq:sum}), we can see that the prediction of the users' mobility patterns enable the BBUs to find the optimal locations of the UAVs. Moreover, by predicting the users' content request distribution the BBUs can determine the most popular content to cache at the UAVs. 


\section{Conceptor Echo State Networks for Content and Mobility Predictions}\label{section2}
In this section, we propose a prediction algorithm using the framework of ESN with conceptors, to find the users' content request distributions and their mobility patterns.
The predictions of the users' content request distribution and their mobility patterns will then be used in Section \ref{section3} to find the user-UAV association, optimal locations of the UAVs and content caching at the UAVs.   
Echo state networks are a special type of recurrent neural networks designed for performing non-linear systems forecasting \cite{Harnessing}. The ESN architecture is based on a randomly connected recurrent neural network, called reservoir, which is driven by a temporal input. The state of the reservoir is a rich representation of the history of the inputs so that a simple linear combination of the reservoir units is a good predictor of the future inputs. In our model, the reservoir will be combined with the input to store the users' context information and will also be combined with the trained output matrix to output the predictions of the users' content request distribution and mobility patterns.
Here, a user's \emph{context} is defined as the current state and attribute of a user including time, week, gender, occupation, age, and device type (e.g., tablet or smartphone). 
Therefore, an ESN-based approach can use the users' context to predict the corresponding behavior such as content request and mobility. 

Compared to traditional neural network and deep learning approaches such as in \cite{Nguyen2012Extracting}, an ESN-based approach can quickly learn the mobility pattern and content request distribution without requiring significant training data due to the use of the echo state property. However, {traditional ESN-based prediction algorithms such as in \cite{EchoStateNetworks2} can be trained to predict only one mobility pattern for each user. 
In particular, to predict the weekly mobility pattern of each user using the traditional ESN approach, the users' context information for an entire week need to be used as input of the ESNs that act as one non-linear system.} In this conventional ESN approach, it is not possible to separate the users' contexts in a week into several days and train the ESNs to predict the user's mobility in each day with one specific non-linear system.       
To enable the ESN algorithm to predict the user's mobility pattern and content request distribution with various non-linear systems, the notion of a \emph{conceptor} as defined in \cite{Jaeger2014Controlling}, is an effective solution that allows characterizing the ESN's reservoir. Conceptors enable an ESN to perform a large number of mobility and content request patterns predictions. Moreover, new patterns can be added to the reservoir of the ESN without interfering with previously acquired ones. 
For each ESN algorithm, an ESN can record a limited number of history input data due to the echo state property of each ESN. Consequently, the learned pattern will be removed as the recorded input data is updated. Here, we call the ability of recording a limited number of history input data as the \emph{memory} of the ESN's reservoir. The idea of a conceptor can be used to allocate any free memory of an ESN's reservoir to the new learned patterns of the mobility and content request distribution. 

Next, we first introduce the components of a conceptor ESN-based prediction algorithm. Then, we formulate the conceptor ESN algorithm to predict the content request distribution and mobility patterns of the users. 
\subsection{Conceptor ESN Components}
The conceptor ESN-based prediction approach  consists of five components: a) agents, b) input, c) output, d) ESN model, and e) conceptor. Since the content request and mobility pattern are user-specific, we design the specific components for the algorithms of the content request distribution and mobility pattern predictions, separately. 
 
\subsubsection{Content request distribution prediction} The content request distribution prediction algorithm has the following components:    

$\bullet$ \emph{Agent}: The agent in our ESNs is the cloud. Since each ESN scheme typically performs a content request distribution prediction for just one user, the cloud's BBUs must implement $U$ conceptor ESN algorithms.  

$\bullet$ \emph{Input:} The conceptor ESN takes input as a vector $\boldsymbol{x}_{t,j}=\left[ {x_{tj1}, \cdots , x_{tjN_x}} \right]^{\mathrm{T}}$ that represents the context of user $j$ at time $t$ which includes gender, occupation, age, and device type (e.g., tablet or smartphone). Here, $N_x$ is the number of properties that constitute the context information of user $j$. The vector $\boldsymbol{x}_{t,j}$ is then used to determine the content request distribution ${\boldsymbol{y} _{t,j}}$ for user $j$. Note that, the input of the ESNs is the information related to the users' content requests. Our goal is to predict the content request distribution using the context of each user.  

$\bullet$ \emph{Output:} The output of the content request distribution prediction ESN at time $t$ is a vector of probabilities $\boldsymbol{y}_{t,j}= \left[ {{p_{tj1}},{p_{tj2}}, \ldots ,{p_{tjN}}} \right]$ that represents the probability distribution of content request of user $j$ with $p_{tjn}$ being the probability that user $j$ requests content $n$ at time $t$. 

$\bullet$ \emph{ESN Model:} An ESN model for each user $j$ can find the relationship between the input $\boldsymbol{x}_{t,j}$ and output $\boldsymbol{y}_{t,j}$, thus building the function between the user's context and the content request distribution. Mathematically, the ESN model consists of the output weight matrix $\boldsymbol{W}_j^{\alpha,\textrm{out}} \in {\mathbb{R}^{N \times N_w}}$ and the dynamic reservoir containing the input weight matrix $\boldsymbol{W}_j^{\alpha,\textrm{in}} \in {\mathbb{R}^{N_w \times N_x}}$, and the recurrent matrix $\boldsymbol{W}_j^\alpha \in {\mathbb{R}^{N_w \times N_w}}$ with $N_w$ being the number of the dynamic reservoir units.    
For each user $j$, the dynamic reservoir will be combined with the input $\boldsymbol{x}_{t,j}$ to store the history context of user $j$. The output weight matrix $\boldsymbol{W}_j^{\alpha, \textrm{out}}$ with the reservior is trained to approximate the prediction function. The ESN model of user $j$ is initially randomly generated following a uniform distribution. To ensure that the reservoir has the echo state property, $\boldsymbol{W}_j^\alpha$ is defined as a sparse matrix with a spectral radius less than one \cite{APractical}.

$\bullet$ \emph{Conceptors:} For content request distribution prediction, we collect the users' context information and the corresponding content requests during the same time slots for different weeks to train one content request distribution. 
We refer to each content request distribution as one prediction pattern. Given a sequence of the reservoir states $\boldsymbol{v}_{j}^i=\left[\boldsymbol{v}_{1,j}^i, \ldots, \boldsymbol{v}_{t,j}^i\right]$ with $\boldsymbol{v}_{t,j}^i = {\left[ {v_{t,j1}^i, \ldots ,v_{t,j{N_w}}^i } \right]^{\rm T}}$ being the reservoir state of prediction pattern $i$ at time $t$ and the state correlation matrix $\boldsymbol{R}_{j}^i=\mathbb{E}\left[ {\boldsymbol{v}_{t,j}^i{{\left( {\boldsymbol{v}_{t,j}^i} \right)}^{\rm T}}} \right]$, the conceptor of prediction pattern $i$ will be \cite{Jaeger2014Controlling}:
\begin{equation}\label{eq:conceptor}
\setlength{\abovedisplayskip}{4 pt}
\setlength{\belowdisplayskip}{4 pt}
\boldsymbol{M}_j^i = \boldsymbol{R}_j^i{\left( {\boldsymbol{R}_j^i + {\chi ^{ - 2}}\boldsymbol{I}} \right)^{ - 1}},
\end{equation}
where $\chi$ is \emph{aperture} defined in \cite{Jaeger2014Controlling}. The aperture $\chi$ needs to be appropriately set for accurately learning several mobility patterns. When the aperture is small, the reservoir of the ESN slightly changes for learning each new pattern. However, for a large aperture, the reservoir of the ESN changes significantly.  

\subsubsection{Mobility pattern prediction}The components of mobility pattern prediction algorithm are:

$\bullet$ \emph{Agents}: The agents in our conceptor ESNs are the BBUs. Since each ESN scheme typically performs mobility prediction for only one user, the BBUs must also implement $U$ conceptor ESN algorithms. 

$\bullet$ \emph{Input:} $\boldsymbol{m}_{t,j}=\left[ {m_{tj1}, \cdots , m_{tjN_x+1}} \right]^{\mathrm{T}}$ represents the current location of user $j$ and the context of this user  at time $t$. Using input $\boldsymbol{m}_{t,j}$, the future locations of user $j$ can be predicted.

$\bullet$ \emph{Output:} $\boldsymbol{s}_{t,j}=\left[ {s_{tj1}, \cdots , s_{tjN_s}} \right]^{\mathrm{T}}$ represents the predicted locations of user $j$ in the next time slots, where $N_s$ is the number of locations in the next $N_s$ time duration $H$.

$\bullet$ \emph{ESN Model:} The ESN model of mobility prediction consists of the output weight matrix $\boldsymbol{W}_j^\textrm{out} \in {\mathbb{R}^{N_s \times N_w}}$, the dynamic reservoir containing the input weight matrix $\boldsymbol{W}_j^\textrm{in} \in {\mathbb{R}^{N_w \times N_x+1}}$, and the recurrent matrix $\boldsymbol{W}_j \in {\mathbb{R}^{N_w \times N_w}}$. The generation of the mobility prediction ESN model is similar to the one in the content request distribution prediction case.

$\bullet$ \emph{Conceptors:} For mobility pattern prediction, we consider each user's mobility in each day during one week as one prediction pattern. The expression of the conceptors is the same as the one for the content request distribution given in (\ref{eq:conceptor}).  
\subsection{Conceptor ESN Algorithm for Content and Mobility Predictions }
Here, we present the proposed conceptor ESN algorithm to predict the content request distribution and mobility. The proposed algorithm consists of two stages: training and prediction stages. 

\subsubsection{Training Stage} The dynamic reservoir state ${\boldsymbol{v}_{t,j}^i}$ of prediction pattern $i$ for user $j$ at time $t$ which is used to store the states of user $j$ is given by \cite{APractical}:
\begin{equation}\label{eq:state}
\setlength{\abovedisplayskip}{5 pt}
\setlength{\belowdisplayskip}{5 pt}
{\boldsymbol{v}_{t,j}^i} ={\mathop{f}\nolimits}\!\left( {\boldsymbol{W}_j^\alpha{\boldsymbol{v}_{t - 1,j}^i} + \boldsymbol{W}_j^{\alpha,\textrm{in}}{\boldsymbol{x}_{t,j}}} \right),
\end{equation}
where $f\!\left(x\right)=\frac{{{e^x} - {e^{ - x}}}}{{{e^x} + {e^{ - x}}}}$. Note that, we consider the input and corresponding prediction output as a training data. In this case, we use $N_{tr}$ training data that consists of $N_{tr}$ users' contexts and the corresponding content request to calculate the conceptors and train the output weight matrix $\boldsymbol{W}_j^{\alpha,\textrm{out}}$. Based on $N_{tr}$ training data and (\ref{eq:state}), the reservoir states before update for each prediction pattern $j$ is $\boldsymbol{v}_{\text{old},j}^i=\left[ 0, \boldsymbol{v}_{1,j}^i, \ldots, \boldsymbol{v}_{{N_{tr}-1},j}^i\right]$ and the updated reservoir states are $\boldsymbol{v}_{j}^i=\left[\boldsymbol{v}_{1,j}^i, \ldots, \boldsymbol{v}_{{N_{tr}},j}^i\right]$. The matrix $\boldsymbol{v}_{\text{old},j}^i$ will be used to train an \emph{input simulation matrix} $\boldsymbol{D}_j \in {\mathbb{R}^{N_w \times N_w}}$ and $\boldsymbol{v}_{j}^i$ will be combined with the updated reservoir states of other prediction patterns to train the output weight matrix.

Then, $\boldsymbol{D}_j$ will be combined with output weight matrix $\boldsymbol{W}_j^{\alpha,\textrm{out}}$ to predict the content request distribution pattern for each user. For each added learning pattern $i$ of each user $j$, the update of $\boldsymbol{D}_j$ will be\cite{Jaeger2014Controlling}:
\begin{equation}\label{eq:D}
\boldsymbol{D}_{j}=\boldsymbol{D}_{\text{old},j}+\boldsymbol{D}_{\text{inc},j}^{i},
\end{equation}
where $\boldsymbol{D}_{\text{inc},j}^{i}={\left( {{{\left( {{{\boldsymbol{S}{\boldsymbol{S}^{\rm T}}} \mathord{\left/
 {\vphantom {{\boldsymbol{S}{\boldsymbol{S}^{\rm T}}} {\left( {{N_{tr}} - 1} \right) + {\chi ^{ - 2}}I}}} \right.
 \kern-\nulldelimiterspace} { {{N_{tr}} }  + {\chi ^{ - 2}}\boldsymbol{I}}}} \right)}^\dag }{{\boldsymbol{S}{\boldsymbol{T}^{\rm T}}} \mathord{\left/
 {\vphantom {{S{T^{\rm T}}} { {{N_{tr}} } }}} \right.
 \kern-\nulldelimiterspace} {{{N_{tr}} }}}} \right)^{\rm T}}$ with $\boldsymbol{S} = {\boldsymbol{F}_j^{i-1}}\boldsymbol{v}_{\text{old},j}^i$ and $\boldsymbol{T} = \boldsymbol{W}_j^{\alpha ,\textrm{in}}\boldsymbol{x}_j^i -\boldsymbol{D}_{\text{old},j}\boldsymbol{v}_{\text{old},j}^i$. Here, $\boldsymbol{F}_j^{i-1}=\neg  \vee \left\{ {\boldsymbol{M}_j^1, \ldots , \boldsymbol{M}_j^{i - 1}} \right\}$ is the free memory of the reservoir with $\neg$ and $\vee$ being the boolean operators \cite{Jaeger2014Controlling}, and $\boldsymbol{x}_j^i=\left[\boldsymbol{x}_{1,j}^i, \ldots, \boldsymbol{x}_{{N_{tr}},j}^i\right]$ is the input sequences of prediction pattern $i$. During the learning of each pattern $i$ of user $j$, the conceptor $\boldsymbol{M}_j^i$ can be computed using (\ref{eq:conceptor}).         
 
In our proposed ESN algorithm, the output weight matrix $\boldsymbol{W}_j^{\alpha,\textrm{out}}$ is trained in an offline manner using ridge regression \cite{APractical} to approximate the prediction function which is given by:
\begin{equation}\label{eq:w2}
\setlength{\abovedisplayskip}{4 pt}
\setlength{\belowdisplayskip}{4 pt}
 \boldsymbol{W}_j^{\alpha, \textrm{out}}=\boldsymbol{y}_j{\boldsymbol{v}_j^{\rm T}}{\left({\boldsymbol{v}_j^{\rm T}}\boldsymbol{v}_j + {\lambda ^2}\boldsymbol{\rm I}\right)^{ - 1}},
 \end{equation}
where ${\boldsymbol{v}_j} = {\left[ {\boldsymbol{v}_{j}^1, \boldsymbol{v}_{j}^2,\ldots, \boldsymbol{v}_{j}^{N_M}} \right]^{\rm T}}$ with $\boldsymbol{v}_{j}^i=\left[\boldsymbol{v}_{1,j}^i, \ldots ,\boldsymbol{v}_{{N_{tr}},j}^i\right]$ being the reservoir state sequence of prediction pattern $i$ for user $j$, and $N_M$ being the number of the prediction patterns of each user's content request distribution. In (\ref{eq:w2}), $\boldsymbol{v}_{j}^i$ can also be used to calculate the conceptor $\boldsymbol{M}_{j}^i$ for prediction pattern $i$ of user $j$.

\subsubsection{Prediction Stage}
Based on the learning stage, we can use the input simulation matrix $\boldsymbol{D}_j$, conceptors $\boldsymbol{M}_{j}=\left[\boldsymbol{M}_{j}^1, \ldots ,\boldsymbol{M}_{j}^{N_M}\right]$, and output weight matrix $\boldsymbol{W}_j^{\alpha,\textrm{out}}$ to obtain the corresponding predictions. In the prediction stage, the reservoir state of pattern $i$ of user $j$ is \cite{Jaeger2014Controlling}:
\begin{equation}\label{eq:statep}
\setlength{\abovedisplayskip}{4 pt}
\setlength{\belowdisplayskip}{4 pt}
{\boldsymbol{v}_{t,j}^i} =\boldsymbol{C}_j^i{\mathop{f}\nolimits}\!\left( {\boldsymbol{W}_j^\alpha{\boldsymbol{v}_{t - 1,j}^i} + \boldsymbol{D}_j{\boldsymbol{v}_{t - 1,j}^i}} \right).
\end{equation}
From (\ref{eq:statep}), we can see that the conceptor of pattern $j$, $\boldsymbol{C}_j$, controls the update of the reservoir states. By changing the conceptor $\boldsymbol{C}_j$, the ESN can predict different patterns in one ESN architecture. The prediction of content request distribution $i$ for user $j$ can be given by:
\begin{equation}\label{eq:output}
\setlength{\abovedisplayskip}{3 pt}
\setlength{\belowdisplayskip}{3 pt}
\boldsymbol{y}_{t,j} = {\boldsymbol{W}_{j}^{\alpha,\textrm{out}}} {{\boldsymbol{v}_{t,j}^i}}.
\end{equation}
From (\ref{eq:statep}) and (\ref{eq:output}), we can see that the conceptor ESN algorithm exploits an input simulation matrix $\boldsymbol{D}_j$ to control the memory of ESN reservoir. The conceptor ESN algorithm for predicting the content request distribution of each user $j$ is shown in Table \ref{tab:algo}.

\begin{table}[!t]
  \centering
  \caption{
    \vspace*{-0.5em}Proposed Conceptor ESN Prediction Algorithm}\vspace*{-1em}
    \begin{tabular}{p{6.3in}}
      \hline \vspace*{-0.8em}
      \textbf{Inputs:}\,\,$N_{tr}$ training data, \vspace*{-0.5em}\\
\hspace*{1em}\textit{Initialize:}   \vspace*{-0.3em}
$\boldsymbol{W}_j^{\alpha, \textrm{in}}$, $\boldsymbol{W}_j^{\alpha}$, $\boldsymbol{W}_j^{\alpha,\textrm{out}}$, $\boldsymbol{y}_{j}=0$, $\boldsymbol{D}_j=0$.
\vspace*{-0cm}

\hspace*{0em} \textit{Training Stage:}\begin{itemize}\vspace*{-0.4em}
\item[] \hspace*{-0em} \textbf{for} each prediction pattern $i$ \textbf{do}.\vspace*{-0.4em}
\item[] \hspace*{0.5em} \textbf{if} reservoir memory space $\boldsymbol{F}_j^{i-1}>0$ \textbf{do}.\vspace*{-0.4em}
\item[] \hspace*{2em}(a) BBUs collect the reservoir states $\boldsymbol{v}_{\text{old},j}^i$ and $\boldsymbol{v}_{j}^i$ to update $\boldsymbol{D}_j$, using (\ref{eq:D}).\vspace*{-0.4em}
\item[] \hspace*{2em}(b) BBUs use the reservoir states $\boldsymbol{v}_{j}^i$ to calculate the conceptor $\boldsymbol{C}_j^i$ using (\ref{eq:conceptor}).\vspace*{-0.4em}
\item[] \hspace*{0.5em} \textbf{else}\vspace*{-0.4em}
\item[] \hspace*{2em}(c) increase reservoir weight matrix $\boldsymbol{W}_j^{\alpha}$, re-train all the prediction patterns.
\item[] \hspace*{0.5em} \textbf{end if}\vspace*{-0.4em}
\item[] \hspace*{0em} \textbf{end for}\vspace*{-0.4em}
\item[] \hspace*{0em}(c) BBUs collect the reservoir states for all patterns $\boldsymbol{v}_{j}$ to train $\boldsymbol{W}_j^{\alpha,\textrm{out}}$, using (\ref{eq:w2}).
\end{itemize}\vspace*{-0cm}\vspace*{-0.4em}
\hspace*{0em}\textit{Prediction Stage:}\begin{itemize}\vspace*{-0em}\vspace*{-0.4em}
\item[]  \hspace*{0em}(a) BBUs chooses the conceptor to obtain the corresponding reservoir state, using (\ref{eq:statep}).\vspace*{-0.4em}
\item[]  \hspace*{0em}(b) Get the prediction of content request distribution based on (\ref{eq:output}) .\vspace*{-0.4em}
\end{itemize}\vspace*{-0cm}\vspace*{-0.1em}
\hspace*{0em}\textbf{Output:}\,\, Prediction $\boldsymbol{y}_{t,j}$\vspace*{0em}\\
   \hline
    \end{tabular}\label{tab:algo}\vspace{-0.6cm}
\end{table}

As shown in Table \ref{tab:algo}, the proposed conceptor ESN algorithm can learn each prediction pattern by a unique non-linear system. This property of the proposed algorithm enables the ESNs to perform the users' behavior predictions using different non-linear systems during different time periods. Furthermore, using the proposed algorithm, one can have the information of the reservoir memory and extract a specific prediction pattern from the learned patterns.

\section{Optimal Location and Content Caching for UAVs}\label{section3}
In this section, we use the content request distribution and mobility patterns predictions resulting from the proposed conceptor ESN algorithm in Section \ref{section2} to solve the problem in (\ref{eq:sum}). In our model, a subset of the users selected by the BBUs are connected to the RRHs. The remaining users are clustered into $K$ clusters and each UAV provides service for one cluster. Based on the associations and predictions, we determine which contents to cache at each UAV and find the optimal location of each UAV. Finally, we analyze the implementation and complexity of the proposed algorithm. Fig. \ref{CRAN} summarizes the proposed framework that is used to solve the problem in (\ref{eq:sum}).\vspace{-0.2cm} 

\begin{figure}[!t]
  \begin{center}
   \vspace{0cm}
    \includegraphics[width=13cm]{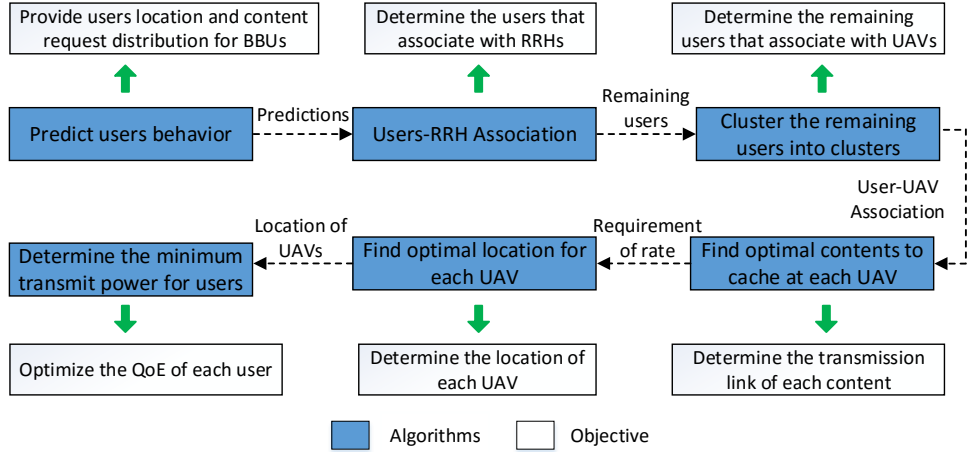}
    \vspace{-0.6cm}
    \caption{\label{CRAN} The procedure used for solving the optimization problem given in (\ref{eq:sum}).}
  \end{center}\vspace{-1cm}
\end{figure}
\vspace{-0cm} 

\subsection{Users-RRH Association}
We find the user-RRH association based on the predicted users' locations at the next time interval. 
Clearly, the prediction accuracy of the users' locations will directly affects the users association. 
A user is associated with RRHs if the following condition is satisfied: 
\begin{theorem}\label{th:1}
\emph{Given minimum $\bar D_{\min}$ and device screen size $S_i$ of each user $i$, user $i$ will be associated with a cluster $k$ of RRHs if the following rate requirement is satisfied:}
\begin{equation}
{C_{t ,qi}^\textrm{H}} \ge \max \left\{ {\frac{L}{{{F}\left( {\Delta\tau  - {{\bar D}_{\min }}\left( {\Delta\tau  - \min \left\{ {\frac{L}{{{v_F}}},\frac{L}{{C_k^{\max }}}} \right\}} \right) - \frac{L}{{{v_{FU}}}}} \right)}},{\delta _{{S_i},n}}} \right\}.
\end{equation}
\end{theorem}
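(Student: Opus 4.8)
The plan is to adapt the delay-rate derivation already carried out for the UAV links in (\ref{eq:ctaoki})--(\ref{eq:Pki}) to link $(a)$ --- the content server--BBUs--RRHs--user path --- and then to fold in the device-type requirement of (\ref{eq:device}).

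First I would recall from (\ref{eq:delay}) that the delay of user $i$ served by RRH cluster $\mathcal{M}_q$ over link $(a)$ is $D_{\tau,i,n} = L/v_{FU} + L/C_{\tau,qi}^{\textrm{H}}$, with $v_{FU}=v_F/N_{FR}$. By Table~\ref{tb:3} and the mapping (\ref{eq:delayD}), the delay component of $Q_{\tau,i,n}$ attains its ``Excellent'' level --- i.e., is maximized --- exactly when $\bar D_{\tau,i,n}\ge \bar D_{\min}=0.8$. Substituting (\ref{eq:delayD}) and using the delay lower bound $\min\{L/v_F,\,L/C_k^{\max}\}$ established in Proposition~\ref{pro1}, this condition is equivalent to
\[
D_{\tau,i,n}\;\le\;\Delta\tau - \bar D_{\min}\!\left(\Delta\tau-\min\left\{\tfrac{L}{v_F},\tfrac{L}{C_k^{\max}}\right\}\right).
\]
Inserting the link-$(a)$ expression for $D_{\tau,i,n}$, moving $L/v_{FU}$ to the right-hand side, and inverting yields the aggregate rate requirement on $C_{\tau,qi}^{\textrm{H}}$; dividing by $F$ --- since $C_{\tau,qi}^{\textrm{H}}=\sum_{t=1}^{F}C_{t,qi}^{\textrm{H}}$ by (\ref{eq:rate}) and the requirement is imposed per interval --- produces the first term inside the $\max$ of the statement.

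Next I would handle the device-type score: from (\ref{eq:device}), $V_{t,i}=1$ holds iff $C_{t,qi}^{\textrm{H}}\ge\delta_{S_i,n}$, so the second term of the QoE is maximized precisely when the per-interval RRH rate exceeds $\delta_{S_i,n}$. Both QoE components are therefore simultaneously at their maxima iff $C_{t,qi}^{\textrm{H}}$ is at least the maximum of the delay-rate requirement and $\delta_{S_i,n}$, which is exactly the displayed inequality. To close the argument for the \emph{association} rule itself, I would note that the RRH transmit power $P_R$ is fixed and absent from the objective (\ref{eq:sum}), which penalizes only UAV power; hence, whenever RRH cluster $k$ can already deliver user $i$'s maximal QoE, serving $i$ from that cluster is (weakly) optimal, so the rate condition is a sufficient condition for associating $i$ with cluster $k$.

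The step requiring the most care is the passage from the aggregate rate $C_{\tau,qi}^{\textrm{H}}$ to the per-interval rate $C_{t,qi}^{\textrm{H}}$ (the factor $F$): this rests on the modeling assumption, introduced together with the intervals $t$, that the user's location --- and hence $\gamma_{t,qi}^{\textrm{H}}$ --- is essentially constant within a slot, together with reading the QoE rate constraint as a per-interval guarantee rather than a slot average, exactly as was done to obtain (\ref{eq:ctaoki}). Everything else is the same algebraic inversion used for the UAV case and is routine, so I would not belabor it.
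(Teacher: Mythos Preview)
Your proposal is correct and follows essentially the same route as the paper: invert the delay mapping~(\ref{eq:delayD}) at the threshold $\bar D_{\min}$ using the link-$(a)$ delay from~(\ref{eq:delay}), divide the resulting slot-level rate by $F$ to get the per-interval requirement, and then combine with the device-rate threshold $\delta_{S_i,n}$ via~(\ref{eq:device}) to obtain the $\max$. Your write-up is in fact more careful than the paper's in two respects---you explicitly justify the per-interval factor $F$ and you spell out why meeting the rate condition implies the association rule through the objective~(\ref{eq:sum})---but these are elaborations of the same argument, not a different one.
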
 
\begin{proof}  
Based on (\ref{eq:delayD}) and $\bar D_{\min}$, the delay is ${D_{\tau ,i,n}} =\Delta\tau  - {\bar D_{\min }}\left( {\Delta\tau  - \min \left\{ {\frac{L}{{{v_F}}},\frac{L}{{C_k^{\max }}}} \right\}} \right)$, and, hence, the delay rate requirement for RRH cluster $q$ transmitting content $n$ to user $i$ during time slot $\tau$ will be: 
\begin{equation}\label{eq:rater}
{C_{\tau,qi}^R} = \frac{L}{{\Delta\tau  - {{\bar D}_{\min }}\left( {\Delta\tau  - \min \left\{ {\frac{L}{{{v_F}}},\frac{L}{{C_k^{\max }}}} \right\}} \right) - \frac{L}{{{v_{FU}}}}}}.
\end{equation}
Therefore, the delay rate requirement during each interval is equal to ${{C_{\tau ,qi}^R} \mathord{\left/
 {\vphantom {{C_{\tau ,qi}^R} {{F}}}} \right.
 \kern-\nulldelimiterspace} {{F}}}$. Since the device rate requirement is $\delta_{S_i,n}$, the rate of RRH cluster $q$ transmitting content $n$ to user $i$, ${C_{\tau ,qi}^\textrm{H}}$ must satisfy ${C_{t ,qi}^\textrm{H}} \ge \max \left\{ {{{C_{\tau ,qi}^R} \mathord{\left/
 {\vphantom {{C_{\tau ,qi}^\textrm{R}} {{F}}}} \right.
 \kern-\nulldelimiterspace} {{F}}},{\delta _{{S_i},n}}} \right\}$. This completes the proof.  
\end{proof}
From Theorem \ref{th:1}, we can see that the user-RRH association depends on the fronthaul rate of each user, the delay rate requirement, and the device rate requirement. From (\ref{eq:rater}), we can see that the fronthaul rate of each user decreases as the number of the users associated with the RRHs increases. Clearly, the decrease of the fronthaul rate for each user will improve the delay rate requirement. 
\subsection{Optimal Content Caching for UAVs}      
In our model, the remaining users who are not associated with RRHs, will be served by the UAVs. In this case, the users-UAVs associations  need to be determined. To this end, we use K-mean clustering approach \cite{H2008Clustering} in which the users are clustered into $K$ groups.
By implementing the K-mean clustering approach, the users that are close to each other will be grouped into one cluster. Thereby, each UAV services one cluster and the user-UAV association will be determined. 
Then, based on the UAV association, we find the optimal contents to cache at each UAV. The content caching will reduce the transmission delay and, hence, decrease the delay rate requirement. 
From (\ref{eq:ctaoki}), we can see that, optimal contents to store at the UAV cache lead to maximum reduction of the UAV's transmit power. The reduction of UAV transmit power is caused by the decrease of the delay rate requirement. Let $\boldsymbol{p}_{j,i}={\left[ {{p_{j,i1}},{p_{j,i2}}, \ldots ,{p_{j,iN}}} \right]}$ be the content request distribution of user $i$ during period $j$ that consists of $H$ time slots. The optimal contents that will be stored at each UAV cache can be determined based on the following theorem.    
 \begin{theorem}\label{th:2}
\emph{The optimal set of contents $\mathcal{C}_{k}$ to cache at each UAV $k$ during period $T$ is:}
\begin{equation}\label{eq:content}
\mathcal{C}_{k}=\mathop {\arg\max }\limits_{{\mathcal{C}_k}} \sum\limits_{j = 1}^{{T \mathord{\left/
 {\vphantom {T H}} \right.
 \kern-\nulldelimiterspace} H}}\sum\limits_{\tau = 1}^{{H }}\sum\limits_{i \in {\mathcal{U}_{\tau,k}}}  { {\sum\limits_{n \in {\mathcal{C}_k}} \left({{p_{j,in}}\Delta {P_{j,\tau ,ki,n}} } \right)} },
\end{equation}
\emph{where {\small$\!\Delta {P_{j,\tau ,ki,n}}\! \!=\left\{ {\begin{array}{*{20}{c}}
{\!\!\!\!\!\!\!\!\!\!\!P_{\tau ,ki}^{\min }{{\left( {\boldsymbol{w}_{\tau,t,k}, C_{\tau ,ki}^R,n} \right)}_{n \notin {\mathcal{C}_k}}}\!\!\! \!\!\!-\!\! P_{\tau ,ki}^{\min }{{\left( {\boldsymbol{w}_{\tau,t,k}, C_{\tau ,ki}^R,n} \right)}_{n \in {\mathcal{C}_k}}},\;\; \;\;\;\;\;\; \;\; \;\;   \frac{{C_{\tau ,ki,n\left( {n \in {\mathcal{C}_k}} \right)}^R}}{{{F}}} \ge {\delta _{{S_i},n}},}\\
{\!\!\!\!P_{\tau ,ki}^{\min }{{\left(\! {\boldsymbol{w}_{\tau,t,k}, {\delta _{{S_i},n}},n}\! \right)}_{n \notin {\mathcal{C}_k}}}\!\! \!-\! \!P_{\tau ,ki}^{\min }{{\left(\! {\boldsymbol{w}_{\tau,t,k},C_{\tau ,ki}^R,n}\! \right)}_{n \in {\mathcal{C}_k}}}\!,\frac{{C_{\tau ,ki,n\left( {n \notin {\mathcal{C}_k}} \right)}^R}}{{{F}}} \!\!\ge {\delta _{{S_i},n}} \!\!\ge\!\! \frac{{C_{\tau ,ki,n\left( {n \in {\mathcal{C}_k}} \!\right)}^R}}{{{F}}},}\\
{\;\;\;\;\;\;\;\;\;\; \;\; \;\; \;\; \;\; \;\; \;\;\;\; \;\;\;\; \;\; \;\;   0,\;\; \;\; \;\; \;\; \;\; \;\;\;\;\;\;\;\;\;\;\;\;\;\;\;\; \;\; \;\; \;\; \;\; \;\; \;\; \;\;\;\; \;\; \;\; \;\; \;\; \;\; \;\; \;\;  \text{otherwise},}
\end{array}} \right.$} 
}
\end{theorem}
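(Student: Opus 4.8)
The plan is to reduce the caching decision to a purely combinatorial selection of the contents that yield the largest expected reduction in the UAVs' transmit power, starting from the objective in (\ref{eq:sum}). First I would fix the user--UAV associations $\mathcal{U}_{\tau,k}$ and the UAV locations $\boldsymbol{w}_{\tau,t,k}$ (both optimized separately in the following subsections) and observe that, for a user $i$ associated with UAV $k$ requesting content $n$ at time slot $\tau$, the only quantity entering the per-user power $P_{\tau,t,ki}^{\min}$ that the caching set $\mathcal{C}_k$ influences is the required rate $\delta_{i,n}^R=\max\{C_{\tau,ki,n}^R/F,\delta_{S_i,n}\}$, and this only through the delay rate requirement $C_{\tau,ki,n}^R$ of (\ref{eq:ctaoki}). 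Caching content $n$ removes the fronthaul term $L/C_{\tau,k}^F$ from the denominator of $C_{\tau,ki,n}^R$, hence strictly decreases it, and since the map $\delta\mapsto P_{t,ki}^{\min}(\boldsymbol{w}_{\tau,t,k},\delta,n)$ in (\ref{eq:Pki}) is strictly increasing, caching can only lower the transmit power.

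The second step is to compute the exact saving $\Delta P_{j,\tau,ki,n}$ by a case analysis on the position of $\delta_{S_i,n}$ relative to the two values of $C_{\tau,ki,n}^R/F$, namely with $n\notin\mathcal{C}_k$ and with $n\in\mathcal{C}_k$. If $\delta_{S_i,n}$ lies below both, the required rate drops from the uncached to the cached delay rate requirement, so the saving is $P_{\tau,ki}^{\min}(\boldsymbol{w}_{\tau,t,k},C_{\tau,ki}^R,n)_{n\notin\mathcal{C}_k}-P_{\tau,ki}^{\min}(\boldsymbol{w}_{\tau,t,k},C_{\tau,ki}^R,n)_{n\in\mathcal{C}_k}$; if $\delta_{S_i,n}$ lies between the two, the required rate drops only to $\delta_{S_i,n}$, giving the second branch; and if $\delta_{S_i,n}$ dominates even the uncached delay rate requirement, then $\delta_{i,n}^R=\delta_{S_i,n}$ irrespective of caching, so the saving is zero. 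This reproduces exactly the three-branch expression for $\Delta P_{j,\tau,ki,n}$ in the statement.

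Next I would take the expectation over the requested content: by the definition of $\boldsymbol{p}_{j,i}=[p_{j,i1},\dots,p_{j,iN}]$, user $i$ requests content $n$ during period $j$ with probability $p_{j,in}$, so over the $T=(T/H)\cdot H$ time slots the expected total transmit power of UAV $k$ equals a constant — its value with an empty cache, which does not depend on $\mathcal{C}_k$ — minus $\sum_{j=1}^{T/H}\sum_{\tau=1}^{H}\sum_{i\in\mathcal{U}_{\tau,k}}\sum_{n\in\mathcal{C}_k}p_{j,in}\Delta P_{j,\tau,ki,n}$. Because (\ref{eq:sum}) minimizes this expected power and the constant term is fixed, the minimizing $\mathcal{C}_k$ is precisely the set maximizing the double sum of expected savings subject to the cardinality/uniqueness constraints (\ref{eq:sum}b), which is the claimed (\ref{eq:content}).

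The main obstacle I anticipate is the bookkeeping in the case analysis of the second step: one must check that the boundary cases — equalities between $\delta_{S_i,n}$ and either delay rate requirement — are treated consistently so that $\Delta P_{j,\tau,ki,n}\ge 0$ in every branch, and that the dependence of $C_{\tau,ki,n}^R$ on whether $n\in\mathcal{C}_k$ is genuinely the only caching-dependent quantity entering (\ref{eq:Pki}). Everything else is substitution of (\ref{eq:ctaoki}) and (\ref{eq:Pki}) into the objective followed by taking the expectation, which is routine.
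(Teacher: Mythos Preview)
Your proposal is correct and follows essentially the same approach as the paper's proof: both fix the associations and UAV locations, derive the three-branch expression for $\Delta P_{j,\tau,ki,n}$ by a case analysis on the position of $\delta_{S_i,n}$ relative to the cached and uncached values of $C_{\tau,ki,n}^R/F$ from (\ref{eq:ctaoki}), then take the expectation over requested contents via $p_{j,in}$ and convert the power minimization in (\ref{eq:sum}) into the savings maximization (\ref{eq:content}). Your exposition is slightly more explicit about the monotonicity of the map $\delta\mapsto P_{t,ki}^{\min}$ in (\ref{eq:Pki}) and about the ``constant minus savings'' decomposition, but the argument is otherwise identical to the paper's.
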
 
\begin{proof}
Since the delay rate requirement ${C_{\tau ,ki,n}^R}$ depends on the contents at the UAV cache, it can be written as $\delta_{i,n}^R=\max \left\{{  \frac{{C_{\tau ,ki,n\left( {n \in {\mathcal{C}_k}} \right)}^R}}{{{F}}} , \frac{{C_{\tau ,ki,n\left( {n \notin {\mathcal{C}_k}} \right)}^R}}{{{F}}} ,{\delta _{{S_i},n}}} \right\}$. Let $P_{\tau,ki}^{\min }={\sum\limits_{t = 1}^{{F}} {P_{j,\tau,t,ki}^{\min }} }$. Then the reduction of UAV transmit power by content caching during time slot $\tau$ of period $j$ will be:
\begin{equation}\small\nonumber
\!\Delta {P_{j,\tau ,ki,n}}\! \!=\left\{ {\begin{array}{*{20}{c}}
{\!\!\!\!\!\!\!\!\!\!\!\!\!\!\!\!P_{\tau ,ki}^{\min }{{\left( {\boldsymbol{w}_{\tau,t,k},C_{\tau ,ki}^R,n} \right)}_{n \notin {\mathcal{C}_k}}} - P_{\tau ,ki}^{\min }{{\left( {\boldsymbol{w}_{\tau,t,k},C_{\tau ,ki}^R,n} \right)}_{n \in {\mathcal{C}_k}}},\;\; \;\; \;\;\;\; \;\; \;\;   \frac{{C_{\tau ,ki,n\left( {n \in {\mathcal{C}_k}} \right)}^R}}{{{F}}} \ge {\delta _{{S_i},n}},}\\
{P_{\tau ,ki}^{\min }{{\left( {\boldsymbol{w}_{\tau,t,k},{\delta _{{S_i},n}},n} \right)}_{n \notin {\mathcal{C}_k}}} - P_{\tau ,ki}^{\min }{{\left( {\boldsymbol{w}_{\tau,t,k},C_{\tau ,ki}^R,n} \right)}_{n \in {\mathcal{C}_k}}},\frac{{C_{\tau ,ki,n\left( {n \notin {\mathcal{C}_k}} \right)}^R}}{{{F}}} \ge {\delta _{{S_i},n}} \ge \frac{{C_{\tau ,ki,n\left( {n \in {\mathcal{C}_k}} \right)}^R}}{{{F}}},}\\
{\;\;\;\;\;\;\;\;\;\; \;\; \;\; \;\; \;\; \;\; \;\;\;\; \;\;\;\; \;\; \;\;   0,\;\; \;\; \;\; \;\; \;\; \;\;\;\;\;\;\;\;\;\;\;\;\;\;\;\; \;\; \;\; \;\; \;\; \;\; \;\; \;\;\;\; \;\; \;\; \;\; \;\; \;\; \;\; \;\;  \text{otherwise},}
\end{array}} \right.
\end{equation}
 Considering the fact that the content request distribution changes once every $H$ time slots, the power minimization problem for UAV $k$ during a period that consists of $H$ time slots is:
\begin{equation}\small
\begin{split} 
\mathop {\min }\limits_{{\mathcal{C}_{k}}}{\sum\limits_{\tau = 1}^{{T }}\sum\limits_{i \in {\mathcal{U}_{\tau,k}}}\!\! \!\!P_{\tau,ki}^{\min }}&=\mathop {\min }\limits_{{\mathcal{C}_{k}}}{
\sum\limits_{j = 1}^{{T \mathord{\left/
 {\vphantom {T H}} \right.
 \kern-\nulldelimiterspace} H}}\sum\limits_{\tau_j = 1}^{{H }}
 \sum\limits_{i \in {\mathcal{U}_{\tau,k}}}\!\! \!\!P_{\tau_j,ki}^{\min }}=\mathop {\min }  \limits_{{\mathcal{C}_{k}}} \sum\limits_{j = 1}^{{T \mathord{\left/
 {\vphantom {T H}} \right.
 \kern-\nulldelimiterspace} H}}{\sum\limits_{\tau = 1}^{{H }}\sum\limits_{i \in {\mathcal{U}_{\tau,k}}} \!\!\!P_{j,\tau,ki}^{\min }}\mathop = \limits^{\left( a \right)} \mathop {\max }  \limits_{{\mathcal{C}_{k}}} \sum\limits_{j = 1}^{{T \mathord{\left/
 {\vphantom {T H}} \right.
 \kern-\nulldelimiterspace} H}}{\sum\limits_{\tau = 1}^{{H }}\sum\limits_{i \in {\mathcal{U}_{\tau,k}}} \!\!\!\Delta {P_{j,\tau ,ki,n}}},\\
 &\mathop  = \limits^{\left( b \right)} \mathop {\max }\limits_{{\mathcal{C}_k}} \sum\limits_{j = 1}^{{T \mathord{\left/
 {\vphantom {T H}} \right.
 \kern-\nulldelimiterspace} H}}\sum\limits_{\tau = 1}^{{H }}\sum\limits_{i \in {\mathcal{U}_{\tau,k}}} \left( { {\sum\limits_{n \in {\mathcal{C}_k}} \!\left({{p_{j,in}}\Delta {P_{j,\tau ,ki,n}} } \right)} }+{ {\sum\limits_{n \notin {\mathcal{C}_k}} \!\left({{p_{j,in}}\Delta {P_{j,\tau ,ki,n}} } \right)} }\right),\\
&= \mathop {\max }\limits_{{\mathcal{C}_k}} \sum\limits_{j = 1}^{{T \mathord{\left/
 {\vphantom {T H}} \right.
 \kern-\nulldelimiterspace} H}}\sum\limits_{\tau = 1}^{{H }}\sum\limits_{i \in {\mathcal{U}_{\tau,k}}}  { {\sum\limits_{n \in {\mathcal{C}_k}} \!\left({{p_{j,in}}\Delta {P_{j,\tau ,ki,n}} } \right)} },
\end{split}
\end{equation}
where $\left(a \right)$ follows the fact that minimizing the transmit power of the UAVs is equivalent to maximizing the reduction of the UAVs' transmit power  caused by caching, and $\left(b\right)$ is obtained by computing the average power reduction using content request probability distribution of each user. This completes the proof. 
\end{proof}
 From Theorem \ref{th:2}, we can see that when the fronthaul rates of all users are the same, the transmit power reduction $\Delta {P_{j,\tau ,ki,n}}$ will be a constant. Subsequently, the optimal content caching becomes $\mathcal{C}_{k}=\mathop {\arg\max }\limits_{{\mathcal{C}_k}} \sum\limits_{j = 1}^{{T \mathord{\left/
 {\vphantom {T H}} \right.
 \kern-\nulldelimiterspace} H}}\sum\limits_{\tau = 1}^{{H }}\!\sum\limits_{i \in {\mathcal{U}_{\tau,k}}} \! { {\sum\limits_{n \in {\mathcal{C}_k}} {{p_{j,in}} }} }
$ which corresponds to the result given in \cite{EchoStateNetworks2}. From Theorem \ref{th:2}, we can see that the content caching depends on the pre-knowledge of users association as well as the content request distribution of each user. Therefore, by predicting the mobility pattern and content request distribution for each user, we can determine the optimal content to cache.

\subsection{Optimal Locations of UAVs}
Here, we determine the optimal UAVs' locations where the UAVs can serve their associated users using minimum transmit power. Once each UAV selects the suitable contents to cache, the transmission link (BBUs-UAV-user or UAV-user) for each content and the delay rate requirement $C_{\tau ,ki,n}^R$ in (\ref{eq:ctaoki}) are determined. In this case, the rate $\delta_{i,n}^R$ which is used to meet the QoE requirment of each user is also determined. 
Next, 
%
we derive a closed-form expression for the optimal location of UAV $k$ during time slot $\tau$ in two special cases.
 
 \begin{theorem}\label{th:3}
\emph{ To minimize the transmit power of UAV $k$, the optimal locations of UAV $k$ during time slot $\tau$ for cases: 
$a)$ UAV $k$ positioned at low altitudes compared to the size of its corresponding coverage, $h_{\tau,k}^2  \ll  \left(x_{t,i}-x_{\tau,k}\right)^2 + \left(y_{t,i}-y_{\tau,k}\right)^2$ and $\mu_\textrm{NLoS}=2$,
$b)$ UAV $k$ is placed at high altitudes compared to the size of its corresponding coverage, $h_{\tau,k}^2 \gg \left(x_{t,i}-x_{\tau,k}\right)^2 + \left(y_{t,i}-y_{\tau,k}\right)^2$, are given by:
\begin{equation}
\setlength{\abovedisplayskip}{3 pt}
\setlength{\belowdisplayskip}{3 pt}
x_{\tau,k}=\frac{{\sum\limits_{i \in \mathcal{U}_{\tau,k}} {\sum\limits_{t = 1}^{{F}} {{x_{t,i}}{\psi _{t,ki}}} } }}{{\sum\limits_{i \in \mathcal{U}_{\tau,k}} {\sum\limits_{t = 1}^{{F}} {{\psi _{t,ki}}} } }}, y_{\tau,k}=\frac{{\sum\limits_{i \in \mathcal{U}_{\tau,k}} {\sum\limits_{t = 1}^{{F}} {{y_{t,i}}{\psi _{t,ki}}} } }}{{\sum\limits_{i \in \mathcal{U}_{\tau,k}} {\sum\limits_{t = 1}^{{F}} {{\psi _{t,ki}}} } }},
\end{equation}
where $\psi_{t,ki}=\left( {{2^{{\delta_{i,n}^R  \mathord{\left/
 {\vphantom {\delta  B}} \right.
 \kern-\nulldelimiterspace} B}}} \!- \!1} \right)\!{\sigma ^2}
{{10}^{{{\left( {{L_{FS}}\left( {{d_0}} \right)+{\chi _{\sigma} }} \right)} \mathord{\left/
 {\vphantom {{\left( {{L_{FS}}\left( {{d_0}} \right) + 10\mu \log \left(  \right)} \right)} {10}}} \right.
 \kern-\nulldelimiterspace} {10}}}}$ with 
 $\sigma  = \left\{ {\begin{array}{*{20}{c}}
{{\sigma _{\textrm{NLoS}}},\;\; \textrm{for case} \left. a \right),}\\
{{\sigma _{\textrm{LoS}}},\;\;\;\; \textrm{for case} \left. b \right).}
\end{array}} \right.$ }
\end{theorem}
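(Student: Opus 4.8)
The plan is to write the total transmit power of UAV $k$ during time slot $\tau$ as $\sum_{i\in\mathcal{U}_{\tau,k}}\sum_{t=1}^{F}P_{t,ki}^{\min}$, substitute the expression from (\ref{eq:Pki}), and simplify the path-loss factor $10^{\bar l_{t,ki}(\boldsymbol{w}_{\tau,t,k},\boldsymbol{w}_{\tau,t,i})/10}$ in each of the two asymptotic regimes. Since the rate requirement $\delta_{i,n}^R$ and the cluster sizes are already fixed at this stage, the UAV position enters only through that path-loss factor, which in both regimes collapses to a position-independent constant times a quadratic in the horizontal separation $\rho_{t,ki}^2:=(x_{t,i}-x_{\tau,k})^2+(y_{t,i}-y_{\tau,k})^2$. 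What then remains is an elementary unconstrained minimization over $(x_{\tau,k},y_{\tau,k})$.

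For case $a)$ I would argue that, when $h_{\tau,k}^2$ is negligible next to $\rho_{t,ki}^2$, the elevation angle $\phi_t=\sin^{-1}(h_{\tau,k}/d_{t,ki})$ is near zero, so by (\ref{eq:propability}) the LoS probability is essentially zero and (\ref{eq:averl}) reduces to $\bar l_{t,ki}\approx l_{t,ki}^{\textrm{NLoS}}$. Putting $\mu_{\textrm{NLoS}}=2$ and $d_{t,ki}^2\approx\rho_{t,ki}^2$ into the NLoS path loss gives $10^{\bar l_{t,ki}/10}\approx 10^{(L_{FS}(d_0)+\chi_{\sigma_{\textrm{NLoS}}})/10}\,\rho_{t,ki}^2$, hence $P_{t,ki}^{\min}\approx\psi_{t,ki}\,\rho_{t,ki}^2$ with $\psi_{t,ki}$ as stated and $\sigma=\sigma_{\textrm{NLoS}}$.

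For case $b)$ I would note that a large altitude forces $\phi_t\to\pi/2$, so the link is LoS and $\bar l_{t,ki}\approx l_{t,ki}^{\textrm{LoS}}$; writing $d_{t,ki}=h_{\tau,k}(1+\rho_{t,ki}^2/h_{\tau,k}^2)^{1/2}$ and Taylor-expanding in $\rho_{t,ki}^2/h_{\tau,k}^2\ll1$ gives $10^{\bar l_{t,ki}/10}\approx 10^{(L_{FS}(d_0)+\chi_{\sigma_{\textrm{LoS}}})/10}\,h_{\tau,k}^{\mu_{\textrm{LoS}}}\big(1+\tfrac{\mu_{\textrm{LoS}}}{2}\rho_{t,ki}^2/h_{\tau,k}^2\big)$. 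Here the first term is position-independent and the factor $\tfrac{\mu_{\textrm{LoS}}}{2}h_{\tau,k}^{\mu_{\textrm{LoS}}-2}$ multiplying $\rho_{t,ki}^2$ is common to every link of UAV $k$, so up to these constants the position-dependent part of $P_{t,ki}^{\min}$ is again $\psi_{t,ki}\,\rho_{t,ki}^2$ with $\sigma=\sigma_{\textrm{LoS}}$.

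Finally, in both regimes the objective, viewed as a function of $(x_{\tau,k},y_{\tau,k})$ only, is the separable positive-definite quadratic $\sum_i\sum_t\psi_{t,ki}\big[(x_{t,i}-x_{\tau,k})^2+(y_{t,i}-y_{\tau,k})^2\big]$ plus a horizontal-position-independent remainder; setting its partial derivatives to zero yields $\sum_i\sum_t\psi_{t,ki}(x_{t,i}-x_{\tau,k})=0$ and the analogue in $y$, whose unique solutions are the $\psi$-weighted centroids in the statement, and positivity of $\sum_i\sum_t\psi_{t,ki}$ makes this the global minimizer. I expect the main obstacle to be making the two reductions precise rather than heuristic: one must check that the LoS probability in (\ref{eq:propability}) is negligible over the range of $\phi_t$ occurring in regime $a)$ for the environment constants $X,Y$, that discarding $h_{\tau,k}^2$ from $d_{t,ki}^2$ and the higher-order Taylor terms in regime $b)$ perturbs the minimizer only at the same order, and that $\chi_\sigma$ may be treated as a fixed quantity absorbed into $\psi_{t,ki}$; once those are granted, the least-squares step is routine.
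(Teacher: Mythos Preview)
Your proposal is correct and follows essentially the same approach as the paper: reduce $\bar l_{t,ki}$ to the pure NLoS (resp.\ LoS) path loss via the elevation-angle limit, then set the first-order conditions in $(x_{\tau,k},y_{\tau,k})$ to zero to obtain the $\psi$-weighted centroid. The only cosmetic difference is that in case~$b)$ the paper differentiates first and then approximates the factor $(\rho_{t,ki}^2+h_{\tau,k}^2)^{\mu_{\textrm{LoS}}/2-1}\approx h_{\tau,k}^{\mu_{\textrm{LoS}}-2}$, whereas you Taylor-expand the objective before differentiating; both routes yield the identical first-order condition.
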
 
\begin{proof}  
At very low altitudes, $h_{\tau,k}^2  \ll  \left(x_{t,i}-x_{\tau,k}\right)^2 + \left(y_{t,i}-y_{\tau,k}\right)^2$, $\frac{{{h_{\tau ,k}}}}{{{d_{t,ki}}\left( {\boldsymbol{w}_{\tau,t,k},\boldsymbol{w}_{\tau,t,i}} \right)}} \approx 0$ leading to ${\phi _t} = {0^\circ }$, and, consequently, $\Pr \left( {{l_{t,ki}^\textrm{NLoS}}} \right)=1$. Thus, we have $\bar l_{t,ki}\left(\boldsymbol{w}_{\tau,t,k}, \boldsymbol{w}_{\tau,t,i}\right) ={l_{t,ki}^\textrm{NLoS}}$ and (\ref{eq:Pki}) can be rewritten as $
{P_{\tau,t,ki}^{\min}}\! =\! \left( {{2^{{\delta_{i,n}^R  \mathord{\left/
 {\vphantom {\delta  B}} \right.
 \kern-\nulldelimiterspace} B}}} \!- \!1} \right)\!{\sigma ^2}
{{10}^{{{\left( {{L_{FS}}\left( {{d_0}} \right)+{\chi _{\sigma_\textrm{NLoS}} }} \right)} \mathord{\left/
 {\vphantom {{\left( {{L_{FS}}\left( {{d_0}} \right) + 10\mu \log \left(  \right)} \right)} {10}}} \right.
 \kern-\nulldelimiterspace} {10}}}} {{d_{t,ki}}\left( {\boldsymbol{w}_{\tau,t,k},\boldsymbol{w}_{\tau,t,i}} \right)^{\mu_\textrm{NLoS}}}$. 

  Now, we find the optimal location $\left(x_{\tau,k},y_{\tau,k}\right)$ of UAV $k$ during time slot $\tau$ 
 in order to minimize $\sum\limits_{i \in \mathcal{U}_{\tau,k}} \sum\limits_{t = 1}^{{F}}P_{\tau,t,ki}^{\min}$. In this case, the derivation of $\sum\limits_{i \in \mathcal{U}_{\tau,k}} \sum\limits_{t = 1}^{{F}}P_{\tau,t,ki}^{\min}$ with respect to $x_{\tau,k}$ is given by:
\begin{equation}\label{eq:de}\small
\frac{{\partial \sum\limits_{i \in \mathcal{U}_{\tau,k}} {\sum\limits_{t = 1}^{{F}} {P_{\tau,t,ki}^{\min }} } }}{{\partial\! {x_{\tau ,k}}}} = \frac{{\sum\limits_{i \in \mathcal{U}_{\tau,k}} {\sum\limits_{t = 1}^{{F}} {\partial P_{\tau,t,ki}^{\min }} } }}{{\partial {x_{\tau ,k}}}}=\sum\limits_{i \in \mathcal{U}_{\tau,k}} {\sum\limits_{t = 1}^{{F}} \mu_\textrm{NLoS} \left( {{x_{\tau ,k}}\! - \!{x_{t,i}}} \right) \psi_{t,ki}{\left( {\left(x_{\tau,k}-x_{t,i}\!\right)^2 \!+ \!\left(y_{\tau,k}-y_{t,i}\right)^2 \!+\! h_{\tau,k}^2} \right)^{\frac{\mu_\textrm{NLoS} }{2}-1}}}.
\end{equation} 
Given $\mu_\textrm{NLoS}=2$, (\ref{eq:de}) is simplified to 
$\sum\limits_{i \in \mathcal{U}_{\tau,k}} \sum\limits_{t = 1}^{{F}}2 \left( {{x_{\tau ,k}}\! - \!{x_{t,i}}} \right) \psi_{t,ki}$=0. As a result, $x_{\tau,k}=\frac{{\sum\limits_{i \in \mathcal{U}_{\tau,k}} {\sum\limits_{t = 1}^{{F}} {{x_{t,i}}{\psi _{t,ki}}} } }}{{\sum\limits_{i \in \mathcal{U}_{\tau,k}} {\sum\limits_{t = 1}^{{F}} {{\psi _{t,ki}}} } }}$. Likewise, we can show that $y_{\tau,k}=\frac{{\sum\limits_{i \in \mathcal{U}_{\tau,k}} {\sum\limits_{t = 1}^{{F}} {{y_{t,i}}{\psi _{t,ki}}} } }}{{\sum\limits_{i \in \mathcal{U}_{\tau,k}} {\sum\limits_{t = 1}^{{F}} {{\psi _{t,ki}}} } }}$.

For case b), since $h_{\tau,k}^2 \gg \left(x_{t,i}-x_{\tau,k}\right)^2 + \left(y_{t,i}-y_{\tau,k}\right)^2$, ${{{d_{t,ki}}\left( {\boldsymbol{w}_{\tau,t,k},\boldsymbol{w}_{\tau,t,i}} \right)}} \approx h_{\tau,k}$ and, hence, $\frac{{{h_{\tau ,k}}}}{{{d_{t,ki}}\left( {\boldsymbol{w}_{\tau,t,k},\boldsymbol{w}_{\tau,t,i}} \right)}} \approx 1 \to {\phi _t} = {90^\circ }$. Consequently, $\Pr \left( {{l_{t,ki}^\textrm{LoS}}} \right)=1$. Then, we have $\bar l_{t,ki}\left(\boldsymbol{w}_{\tau,t,k},\boldsymbol{w}_{\tau,t,i}\right) ={l_{t,ki}^\textrm{LoS}}$. The derivation of $\sum\limits_{i \in \mathcal{U}_{\tau,k}} \sum\limits_{t = 1}^{{F}}P_{\tau,t,ki}^{\min}$ will be:
\begin{equation}\small\label{eq:derivation}
\setlength{\abovedisplayskip}{-2 pt}
\setlength{\belowdisplayskip}{ 0pt}
\begin{split}
\frac{{\partial \sum\limits_{i \in \mathcal{U}_{\tau,k}} {\sum\limits_{t = 1}^{{F}} {P_{\tau,t,ki}^{\min }} } }}{{\partial {x_{\tau ,k}}}} &=\sum\limits_{i \in \mathcal{U}_{\tau,k}} {\sum\limits_{t = 1}^{{F}} \mu_\textrm{LoS} \left( {{x_{\tau ,k}}\! - \!{x_{t,i}}} \right) \psi_{t,ki}{\left( {\left(x_{\tau,k}-x_{t,i}\!\right)^2 \!+ \!\left(y_{\tau,k}-y_{t,i}\right)^2 \!+\! h_{\tau,k}^2} \right)^{\frac{\mu_\textrm{LoS} }{2}-1}}},\\
& \approx \sum\limits_{i \in \mathcal{U}_{\tau,k}} {\sum\limits_{t = 1}^{{F}} \mu_\textrm{LoS} \left( {{x_{\tau ,k}}\! - \!{x_{t,i}}} \right) \psi_{t,ki} h_{\tau,k}^{{\mu_\textrm{LoS} }-2}}=0.
\end{split}
\end{equation} 
As a result, $x_{\tau,k}=\frac{{\sum\limits_{i \in \mathcal{U}_{\tau,k}} {\sum\limits_{t = 1}^{{F}} {{x_{t,i}}{\psi _{t,ki}}} } }}{{\sum\limits_{i \in \mathcal{U}_{\tau,k}} {\sum\limits_{t = 1}^{{F}} {{\psi _{t,ki}}} } }}$ and $y_{\tau,k}=\frac{{\sum\limits_{i \in \mathcal{U}_{\tau,k}} {\sum\limits_{t = 1}^{{F}} {{y_{t,i}}{\psi _{t,ki}}} } }}{{\sum\limits_{i \in \mathcal{U}_{\tau,k}} {\sum\limits_{t = 1}^{{F}} {{\psi _{t,ki}}} } }}$.       
This completes the proof.  
\end{proof}
Using Theorem \ref{th:3}, we can find the optimal locations of the UAVs  given the users association and altitude $h_{\tau,k}$ for the two special cases. 
For more generic cases, it is highly challenging to find the optimal UAVs' locations using derivation, since the UAV's altitude depends on the $x$ and $y$ coordinates of the UAV. 
Therefore, we use a learning algorithm given in \cite{21} and \cite{Chen2016Echo} to find a sub-optimal solution. The learning algorithm can learn the network state and exploit different actions to adapt the UAV's location according to the network. After the learning step, each UAV will find a sub-optimal location to service the users in a power efficient way. \vspace{-0.2cm}
\subsection{Implementation and Complexity}
The BBUs implement the conceptor ESN algorithm to predict the content request distribution and mobility patterns for each user. Hence, the BBUs have the prediction information for each user. Moreover, the BBUs have the information of the RRHs cluster and their locations. Therefore, the proposed solutions in Section \ref{section2} and \ref{section3} to the problem in (\ref{eq:sum}) are all implemented in the BBUs. The centralized implementation enables the BBUs to quickly determine the optimal location for each UAV without any information exchange. Furthermore, using a centralized implementation, one can  directly modify the UAVs' locations and users' associations directly when the prediction of the conceptor ESN algorithm is not accurate.

The complexity of the proposed algorithm can be divided into two parts: the conceptor ESN algorithm and the optimization algorithm. The complexity of the conceptor ESN algorithm depends on the number of patterns that needs to be leaned for each user. Therefore, the complexity of the conceptor ESN algorithm will be $O\!\left(U{{ \times 2T} \mathord{\left/
 {\vphantom {{ \times T} H}} \right.
 \kern-\nulldelimiterspace} H}\right)$. Note that, the implementation of the user-RRH association algorithm depends on the number of the RRH clusters, and, hence, the complexity of this algorithm is $O\!\left(E\right)$ where $E$ is the number of the RRH clusters and ${{2T} \mathord{\left/
 {\vphantom {{2T} H}} \right.
 \kern-\nulldelimiterspace} H}$ is the number of the patterns that each user needs to learn. Finally, the algorithm complexity for finding the optimal content to cache and optimal location of each UAV all depend on the number of the UAVs. Thus, the algorithm complexity is $O\!\left(K\right)$ with $K$ being the number of UAVs. 
   
\section{Simulation Results}
For our simulations, the content request data that the ESN uses to train and predict content request distribution is obtained from \emph{Youku} of \emph{China network video index}\footnote{\vspace{-0.27cm}The data is available at \url{http://index.youku.com/}.}. Here, one circular CRAN area with a radius $r = 500$ m is considered with $U=70$ uniformly distributed users and $R=20$ uniformly distributed RRHs.
The detailed parameters are listed in Table  \uppercase\expandafter{\romannumeral5}. Actual pedestrian mobility data is measured from the real data generated at the \emph{Beijing University of Posts and Telecommunications}. For comparison purposes, we investigate: a) optimal algorithm with complete users' information, b) ESN algorithm in \cite{EchoStateNetworks2} to predict the content request distribution and mobility pattern, and c) random caching with ESN algorithm in \cite{EchoStateNetworks2} to predict content request distribution. All statistical results are averaged over 5000 independent runs. The accuracy of ESN prediction is measured by normalized root mean square error (NRMSE) \cite{Jaeger2014Controlling}. 

\begin{table}\footnotesize
  \newcommand{\tabincell}[2]{\begin{tabular}{@{}#1@{}}#2\end{tabular}}
\renewcommand\arraystretch{0.7}
 \caption{
    \vspace*{-0.3em}SYSTEM PARAMETERS}\vspace*{-1.5em}
\centering  
\begin{tabular}{|c|c|c|c|c|c|}
\hline
\textbf{Parameter} & \textbf{Value} & \textbf{Parameter} & \textbf{Value}&\textbf{Parameter} & \textbf{Value} \\
\hline
$F$ & 1000 & $P_R,P_B$ & 20, 30 dBm&$K$&5\\
\hline
$B$ & 1 MHz & $N$ & 25&$C$&1\\
\hline
$\chi _{\sigma_\textrm{LoS}}$ & 5.3 & $H$ & 10 &$T$&120\\
\hline
$ N_{tr}$ &1000 & $d_0$ & 5 m &$f_c$& 38 GHz\\
\hline
$B_v$ & 1 GHz & $\lambda$ & 0.01&$h_{\min}$ & 100 m \\
\hline
$L$ & 1 Mbit & $P_{\max}$ & 20 W&$\delta_{S_i,n}$ & 5 Mbit/s\\
\hline
$\mu_\textrm{LoS}$,$\mu_\textrm{NLoS}$& 2, 2.4 &$N_x, N_s$ &4, 12&$\chi  $ & 15\\
\hline
 $N_w$ & 1000 & $\sigma ^2$ & -95 dBm&$\zeta_1,\zeta_2$&0.5,0.5\\ 
\hline
 $\chi _{\sigma_\textrm{NLoS}}$ & 5.27 & $\beta, \eta $ & 2, 100&$X,Y$& 11.9, 0.13\\ 
\hline
\end{tabular}
 \vspace{-0.5cm}
\end{table}
 
\begin{figure}[!t]
  \begin{center}
   \vspace{0cm}
    \includegraphics[width=16cm]{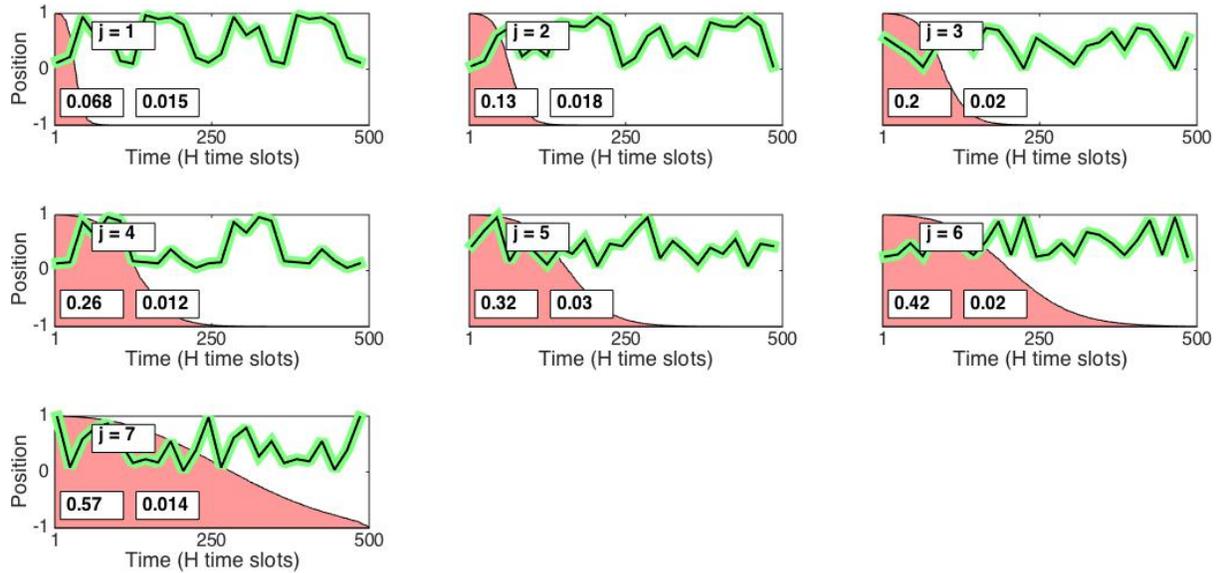}
    \vspace{-0.6cm}
    \caption{\label{figure1} Mobility patterns predictions of Conceptor ESN algorithm. In this figure, the green curve represents the conceptor ESN prediction, the black curve is the real positions, top rectangle $j$ is the index of the mobility pattern learned by ESN, the legend on the bottom left shows the total reservoir memory used by ESN and the legend on the bottom right shows the NRMSE of each mobility pattern prediction.}
  \end{center}\vspace{-1.2cm}
\end{figure}
\vspace{-0cm}

Fig. \ref{figure1} shows how the memory of the conceptor ESN reservoir changes as the number of the mobility patterns that were learned by the conceptor ESN varies. Here, one mobility pattern represents the users' trajectory in one day and the colored region represents the memory used by the conceptor ESN. In Fig. \ref{figure1}, we can see that the memory usage increases as the number of the learned mobility patterns increases. This is due to the fact the conceptor ESN uses a limited memory to learn mobility patterns. From Fig. \ref{figure1}, we can also see that the conceptor ESN uses less memory for learning mobility pattern 2 compared to pattern 6. In fact, mobility pattern 2 is similar to mobility pattern 1, and, hence, the conceptor ESN requires only a small amount of memory to learn mobility pattern 2. However, the conceptor ESN needs to use more memory to learn mobility pattern 6. Clearly, when a new mobility pattern needs to be learned, the proposed approach only needs to learn the difference  between the learned mobility patterns and the new one.

 
 \begin{figure}[!t]
  \begin{center}
   \vspace{0cm}
    \includegraphics[width=9cm]{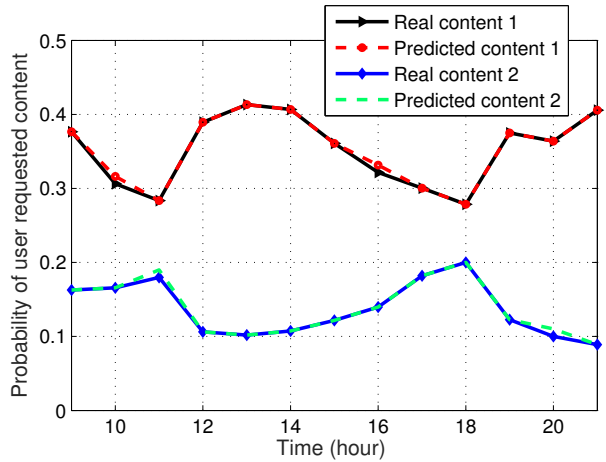}
    \vspace{-0.6cm}
    \caption{\label{figure3}Content request probability predictions.}
  \end{center}\vspace{-1.1cm}
\end{figure}
\vspace{-0cm}  

In Fig. \ref{figure3}, we show the variations of two content request probabilities of a selected user during one day. The user is randomly chosen from the set of users in the network. From Fig. \ref{figure3}, we can see that, the probability with which this user requests content 1 decreases during working hours (9:00-11:00 and 14:00-18:00) and increases at all other times. Similarly, the request probability of content 2 increases during working hours and decreases during the rest of the time. This is due to the fact that content 1 is an entertainment content while content 2 is a work-related content. Fig. \ref{figure3} also shows that the sum of the probability with which this user requests content 1 and content 2 exceeds 0.5 during each hour. This is because the user always requests a small amount of contents during one day.

\begin{figure}
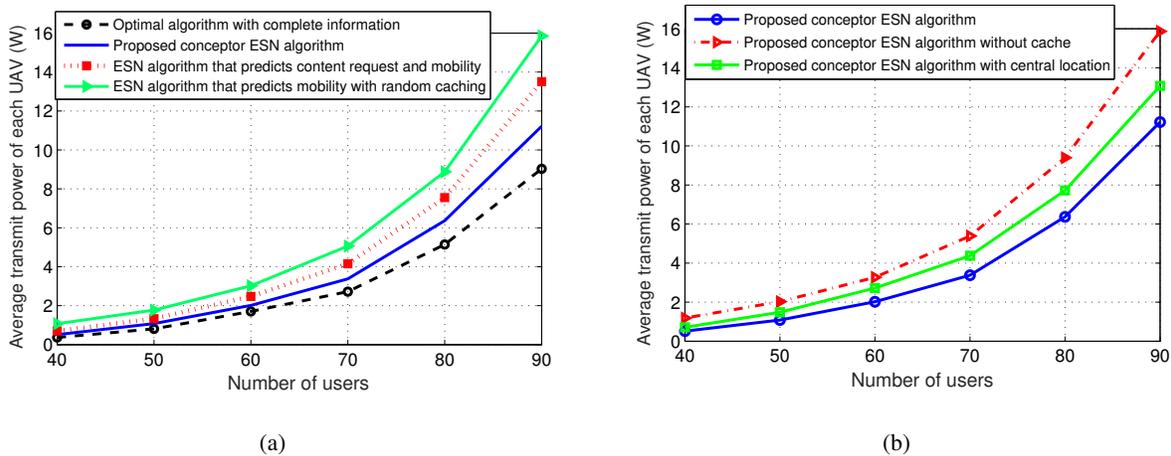

\centering
\vspace{0cm}
\subfigure[]{
\label{figure2a} 
\includegraphics[width=8cm]{figure4.eps}}
\subfigure[]{
\label{figure2b} 
\includegraphics[width=8cm]{figure9.eps}}
  \vspace{-0.8cm}
 \caption{\label{figure2} Average minimum transmit power as the number of the users varies. ($K=5$ and $C=1$.)} 
  \vspace{-0.7cm}
\end{figure}       

Fig. \ref{figure2} shows how the average minimum transmit power of each UAV in a time period changes as the number of the users varies. In Fig. \ref{figure2}, we can see that the average UAV transmit power of all algorithms increases as the number of the users increases. This is due to the fact that the number of the users associated with the RRHs and the capacity of the wireless fronthaul link of UAVs are limited. Therefore, the UAVs need to increase their transmit power to satisfy the QoE requirement of each user. From Fig. \ref{figure2a}, we can also see that the proposed approach can reduce the average transmit power of the UAVs of about 20\% compared to the ESN algorithm used to predict the content request and mobility for a network with 70 users. This is because the conceptor ESN that separates the users' behavior into multiple patterns and uses the conceptor to learn these patterns, can predict the users' behavior more accurately compared to the ESN algorithm. Fig. \ref{figure2b} shows that the proposed algorithm can yield, respectively, 40\% and 25\%
gains with respect to reducing the average transmit power compared to the proposed algorithm without cache
and the proposed algorithm without optimizing the UAVs' locations.

 \begin{figure}[!t]
  \begin{center}
   \vspace{0cm}
    \includegraphics[width=9cm]{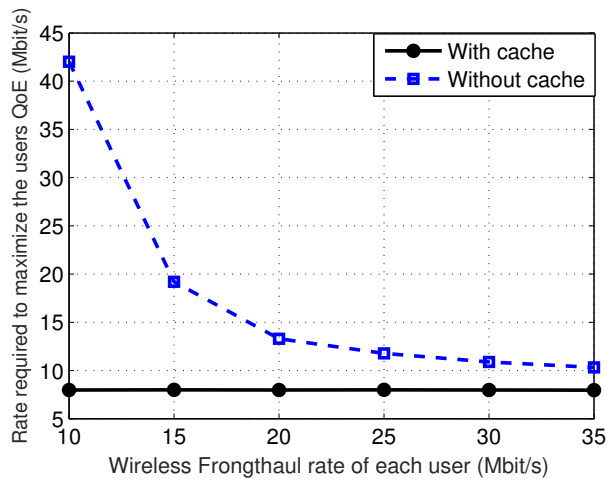}
    \vspace{-0.6cm}
    \caption{\label{figure8} Rate required to maximize the users QoE as the fronnthaul rate of each user changes.}
  \end{center}\vspace{-1.3cm}
\end{figure}
\vspace{-0cm}

Fig. \ref{figure8} shows the rate needed for satisfying the QoE requirement of each user versus the wireless fronthaul rate of each user. In Fig. \ref{figure8}, we can see that the rate required to maximize the users QoE of BBUs-UAV-user link decreases as the wireless fronthaul rate increases. However, the rate needed to maximize the user's QoE of UAV-user link  not change when the fronthaul rate varies. Clearly, the use of caching at the UAVs can significantly reduce the rate required to reach the QoE threshold of each user when the wireless fronthaul rate for each user is low.

\begin{figure}[!t]
  \begin{center}
   \vspace{0cm}
    \includegraphics[width=8.5cm]{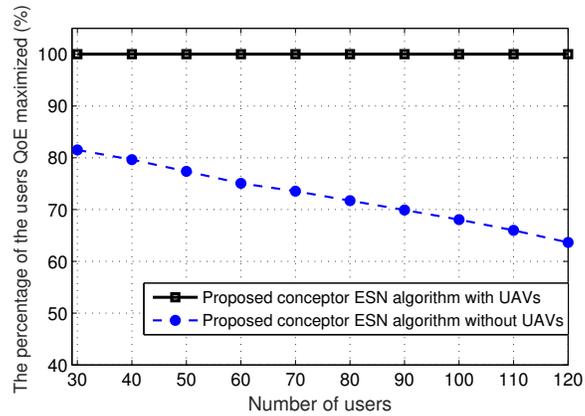}
    \vspace{-0.6cm}
    \caption{\label{figure7}The percentage of the users QoE that is maximized as the number of the users varies}
  \end{center}\vspace{-1.1cm}
\end{figure}
\vspace{-0cm}    

In Fig. \ref{figure7}, we show how the percentage of users with satisfied QoE requirement changes as the number of the users varies. From Fig. \ref{figure7}, we can see that the percentage of the satisfied users decreases as the number of the users increases for the case with no UAVs. However, using the proposed approach, the QoE remains maximum for all number of users when the UAVs are deployed. The proposed algorithm can yield a gain of 61\%
gain in terms of the percentage of the users with satisfied QoE compared to the proposed algorithm without UAVs for the network with 120 users. This is due to the fact that the UAVs can maximize the users' QoE when the RRHs are not able to satisfy the QoE requirements.

 \begin{figure}[!t]
  \begin{center}
   \vspace{0cm}
    \includegraphics[width=9cm]{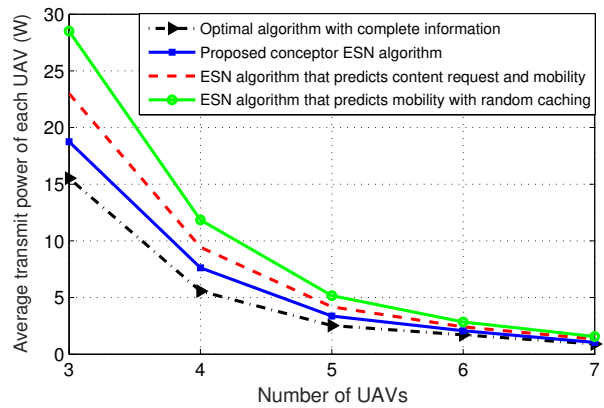}
    \vspace{-0.6cm}
    \caption{\label{figure5}Average minimum transmit power as the number of UAVs changes ($U=70$ and $C=1$).}
  \end{center}\vspace{-1.3cm}
\end{figure}
\vspace{-0cm}

In Fig. \ref{figure5}, we show how the average minimum transmit power of UAVs changes as the number of the UAVs varies. From Fig. \ref{figure5}, we can see that the average minimum transmit power of each UAV decreases from as the number of the UAVs increases. In particular, using the proposed algorithm, the average transmit power of the UAVs decreases by 85\% when the number of UAVs increases from 3 to 7. This is due to the fact that for a higher number of UAVs the number of users associated with each UAV decreases, and, hence, the average transmit power per UAV also decreases.
As shown in Fig. \ref{figure5}, the proposed approach becomes closer to the optimal one as the number of UAVs increases. The reason is that the location prediction error is higher for a lower number of UAVs (or equivalently the clusters).


\begin{figure}[!t]
  \begin{center}
   \vspace{0cm}
    \includegraphics[width=9cm]{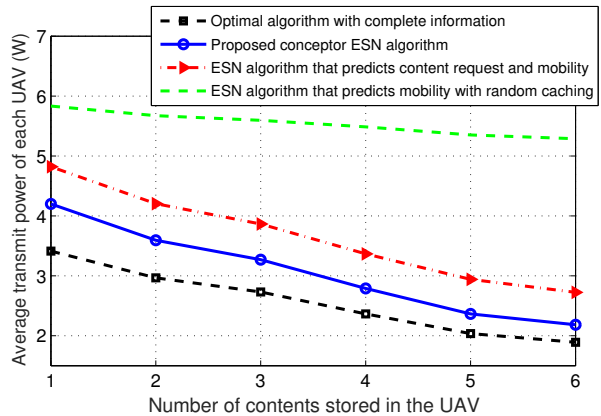}
    \vspace{-0.6cm}
    \caption{\label{figure6}Average minimum transmit power as the number of the contents stored in a UAV cache varies ($U=70$ and $K=5$).}
  \end{center}\vspace{-1.1cm}
\end{figure}
\vspace{-0cm}

Fig. \ref{figure6} shows the average minimum transmit power of each UAV as a function of the number of the contents stored at the UAV cache. As shown in Fig. \ref{figure6}, the average minimum transmit powers of all considered algorithms increase as the number of storage units increases. This is due to the fact that the probability that the requested contents of the users are stored at the UAV cache increases, and, consequently, the UAV will directly transmit the requested contents to the users. Fig. \ref{figure6} also shows that the ESN algorithm that predicts the content request and mobility can yield up to $49\%$ power reduction compared to the ESN algorithm that predicts the mobility with the random caching scheme.


\begin{figure}
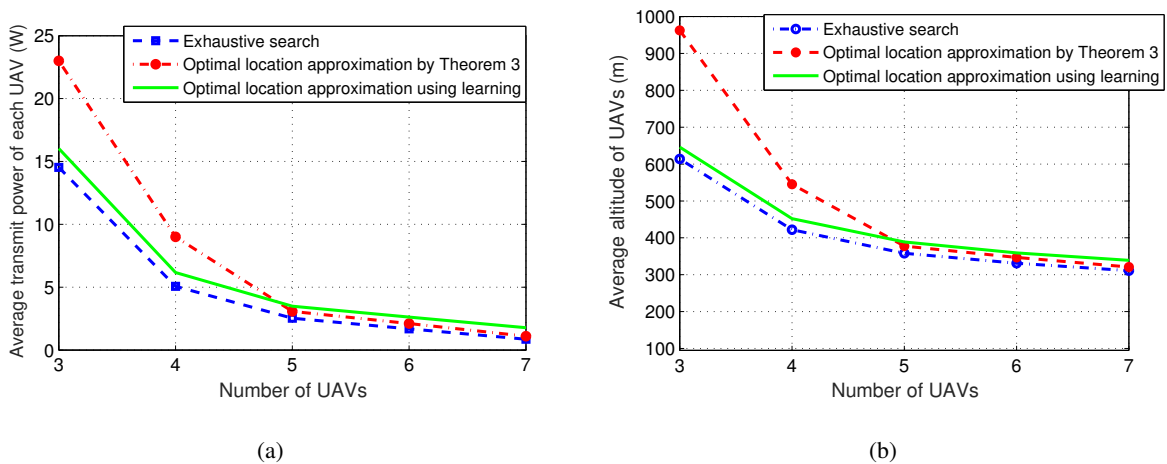

\centering
\vspace{0cm}
\subfigure[]{
\label{figure11a} 
\includegraphics[width=8cm]{figure11.eps}}
\subfigure[]{
\label{figure11b} 
\includegraphics[width=7.7cm]{figure13.eps}}
  \vspace{-0.2cm}
 \caption{\label{figure11} Average minimum transmit power and average altitude vs. the number of the UAVs.}
  \vspace{-1cm}
\end{figure}     

In Fig. \ref{figure11}, we show how the average transmit power and average altitude of the UAVs change as the number of UAVs varies. In this case, we compare the result of our proposed approach with the optimal result obtained by an exhaustive search method. For the learning algorithm, the interval of the neighboring action of each coordinate is 3 m. Figs. \ref{figure11a} shows that the optimal location of the UAV approached by Theorem \ref{th:3} has only 5.8\% deviation compared to the exhaustive search. Furthermore, as shown in Fig. \ref{figure11b}, by increasing the number of UAVs from 3 to 7, the average altitude of the UAVs decreases from 980 m to 302 m in the proposed algorithm case. This is due to the fact that for a higher number of the UAVs,  each UAV needs to provide coverage for a smaller area and, hence,  it can be deployed at a lower altitude. From Figs. \ref{figure11a} and \ref{figure11b}, we can also see that, as the number of the UAVs increases, the result of Theorem \ref{th:3} approaches the optimal solution that is obtained by the exhaustive search.      
 This is due to the fact that for a higher number of UAVs, the coverage area of each UAV decreases,
and, hence, the approximation condition in Theorem \ref{th:3} will hold with a tighter bound. 
\section{Conclusions}
In this paper, we have proposed a novel framework that uses flying UAVs to provide service for the mobile users in a CRAN system. First, we have presented an optimization problem that seeks to guarantee the QoE rquirement of each user using the minimmum transmit power of the UAVs. Next, to solve this problem, we have developed a novel algorithm based on the echo state networks and concepters. The proposed algorithm allows predicting the content request distribution of each user with limited information on the network state and user context. The proposed algorithm also enables the ESNs separate the users behavior into several patterns and learn these patterns with various non-linear systems. Simulation results have shown that the proposed approach yields significant performance gains in terms of minimum transmit power compared to conventional ESN approaches.\vspace{-0.2cm}    
\bibliographystyle{IEEEbib}
\def\baselinestretch{1.10}
\bibliography{references}
\end{document}